\newcommand{\dep}{{\sf dep}\xspace}
\renewcommand{\FPT}{{\sf FPT}\xspace}
\newcommand{\ETH}{{\sf ETH}\xspace}
\definecolor{dark-red}{rgb}{0.4,0.15,0.15}
\definecolor{dark-blue}{rgb}{0.15,0.15,0.4}
\definecolor{medium-blue}{rgb}{0,0,0.5}
\definecolor{gray}{rgb}{0.5,0.5,0.5}
\definecolor{color-Ig}{rgb}{0.15,0.7,0.15}
\newcommand{\yes}{{\sf yes}\xspace}
\newtheorem{theorem}{Theorem}
\newtheorem{corollary}{Corollary}
\newtheorem{lemma}{Lemma}
\newtheorem{question}{Question}
\title{A Unifying Model for Locally Constrained\\Spanning Tree Problems}
\author[1]{Luiz Alberto do Carmo Viana}
\author[2]{Manoel Camp\^elo}
\author[3]{Ignasi Sau}
\author[4]{Ana Silva}
\affil[1]{\emph{\small Campus de Crate\'us, Universidade Federal do Cear\'a, Crate\'us, Brazil}}
\affil[2]{\emph{\small Dep. de Estatística e Matemática Aplicada, Universidade Federal do Cear\'a, Fortaleza, Brazil}}
\affil[3]{\emph{\small LIRMM, Universit\'e de Montpellier, CNRS, Montpellier, France}}
\affil[4]{\emph{\small Departamento de Matem\'atica, Universidade Federal do Cear\'a, Fortaleza, Brazil\newline \texttt{luizalberto@crateus.ufc.br, mcampelo@lia.ufc.br, ignasi.sau@lirmm.fr, anasilva@mat.ufc.br}}}
\begin{document}

\maketitle

\begin{abstract}
  Given a graph $G$ and a digraph $D$ whose vertices are the edges of
  $G$, we investigate the problem of finding a spanning tree of $G$
  that satisfies the constraints imposed by $D$. The restrictions to
  add an edge in the tree depend on its neighborhood in $D$. Here, we
  generalize previously investigated problems by also considering as
  input functions $\ell$ and $u$ on $E(G)$ that give a lower and an
  upper bound, respectively, on the number of constraints that must be
  satisfied by each edge. The produced feasibility problem is denoted
  by \texttt{G-DCST}, while the optimization problem is denoted by
  \texttt{G-DCMST}.  We show that \texttt{G-DCST} is $\NP$-complete
  even under strong assumptions on the structures of $G$ and $D$, as
  well as on functions $\ell$ and $u$.  On the positive side, we prove
  two polynomial results, one for \texttt{G-DCST} and another for
  \texttt{G-DCMST}, and also give a simple exponential-time algorithm
  along with a proof that it is asymptotically optimal
  under the \ETH. Finally, we prove that other previously studied
  constrained spanning tree (\textsc{CST}) problems can be modeled within our
  framework, namely, the \textsc{Conflict CST}, the \textsc{Forcing
    CST}, the \textsc{At Least One/All Dependency CST}, the
  \textsc{Maximum Degree CST}, the \textsc{Minimum Degree CST}, and
  the \textsc{Fixed-Leaves Minimum Degree CST}.
\end{abstract}

\section{Introduction}

Let $G$ be a graph and $D$ be a (directed or undirected) graph whose
vertices are the edges of $G$. In other terms, $D$ defines a relation
on the edge set of $G$.  The \emph{dependencies} of an edge
$e\in E(G)$ are given by its (in-)neighborhood in $D$, i.e., by the
set $\dep_D(e) = \{e'\in E(G)\mid (e',e)\in E(D)\}$. We omit the subscript $D$ in
$\dep_D(e)$ whenever the dependency graph is clear from the context.

Many problems have been investigated under the light of dependencies
between pairs of objects, such as the knapsack problem~\cite{GR.19},
bin-packing~\cite{EFL.11}, maximum flow~\cite{PS.13}, scheduling
problems~\cite{BJ.93}, maximum matchings~\cite{darmann2011paths}, shortest
paths~\cite{darmann2011paths}, or maximum acyclic
subgraphs~\cite{MapaUrrutia2015}. Generally, the dependency problems
defined on graphs can be described as the problem of finding a
subgraph $H$ of $G$ satisfying the dependency constraints imposed by
$D$.

However, the notion of dependency itself may vary. For example, every
$(e, e')$ in $D$ could mean that, whenever $e'$ is chosen (not chosen), we
get that $e$ cannot (must) be chosen, thus expressing a \emph{conflict
  constraint} (\emph{forcing constraint}),  
  with $D$ being the \emph{conflict graph} (\emph{forcing graph}).
In this paper, we introduce a generalization of dependency constrained
problems, and investigate this generalization for spanning trees. In particular, our model generalizes many of the constrained
spanning tree problems that have been investigated in the literature.

\paragraph{Our contribution.} For the generalized version of
dependency constrained problems, together with graph $G$ and (di)graph
$D$, we also consider functions $\ell$ and $u$ that assign, to each
$e\in E(G)$, a lower and an upper bound on the number of dependencies
that must be ensured for $e$. This means that a subgraph
$H\subseteq G$ satisfies the imposed constraints if and only if the
number of edges in $E(H)\cap \dep(e)$ is at least $\ell(e)$ and at
most $u(e)$, for every $e\in E(G)$\footnote{We can always assume that
  $0\leq \ell(e)\leq u(e)\leq |\dep(e)|$, and so $\ell(e)=u(e)=0$ if
  $\dep(e)=\emptyset$.}; we then say that $H$
\emph{$(\ell,u)$-satisfies $D$}. When $H$ is asked to be a spanning
tree, we call the problem the \textsc{Generalized Dependency
  Constrained Spanning Tree} problem, and denote it by
\texttt{G-DCST}. Also, sometimes we deal with the related optimization
problem by considering weights on the edges of $G$;
this is called the \textsc{Generalized Dependency Constrained Minimum
  Spanning Tree} problem, and is denoted by \texttt{G-DCMST}. Let us
observe that when $\ell(e)=0$ and $u(e)\geq |\dep(e)|$, for all
$e\in E(G)$, then \texttt{G-DCST} is equivalent to deciding whether
$G$ is connected, and \texttt{G-DCMST} corresponds to the classical
\textsc{Minimum Spanning Tree} problem.

Clearly, the feasibility problem \texttt{G-DCST} is a particular case
of the optimization problem \texttt{G-DCMST}, where the weight of each
edge is equal to one. This is why, whenever possible, we give preference to prove
$\NP$-completeness results for the feasibility problem, and get
polynomial results for the optimization problem.
We use reductions from $(3, 2, 2)$\texttt{-SAT} to prove our main $\NP$-completeness results, whereas our polynomial results arise as consequences of the Matroid Intersection Theorem~\cite{edmonds1979matroid} (cf. Section~\ref{sec:defs}).

Considering the generalized version of the spanning tree problem,
given a graph $G$, a digraph $D = (E(G),A)$, and functions $\ell,u$,
we prove that deciding whether $G$ has a spanning tree that
$(\ell,u)$-satisfies $D$ is $\NP$-complete in the following cases:
\begin{enumerate}[(i)]
\item\label{NP1} $\ell(e) = u(e) = |\dep(e)|$ for every $e\in E(G)$,
  $D$ is a forest of oriented paths of length at most two where all
  components are directed paths, out-stars, or in-stars, and $G$ is an
  outerplanar chordal graph with diameter at most~two. Furthermore, this
  problem cannot be solved in time $2^{o(n+m)}$ unless the \ETH fails,
  where $n=|V(G)|$ and $m=|E(G)|$;
\item\label{NP2} When $\ell,u$ are constant functions, for every pair
  of constant values such that  $\ell\le u$;
\item\label{NP3} When $\ell(e)=0$, and $u(e)=|\dep(e)|-1$, for every
  $e\in E(G)$.
\end{enumerate}

On the positive side, we prove the following:
\begin{enumerate}[(a)]
\item\label{P1} \texttt{G-DCST} can be solved in polynomial time when
  $D$ is an oriented matching, and
  $\ell(e) = u(e) = |\dep(e)|$ for every $e\in E(G)$;
\item\label{P2} \texttt{G-DCMST} can be solved in polynomial time when
  $\ell=0$, $D$ is a collection of symmetric complete digraphs
  $D_1,\ldots,D_k$, and $u(e)=u(e')$ whenever $e,e'$ are within the
  same component of $D$;
\item\label{P3} \texttt{G-DCST} can be solved in time
  $O(2^{m}\cdot (n+m))$, where $n=|V(G)|$ and $m=|E(G)|$.
\end{enumerate}

It is worth observing that~\ref{P1} and \ref{NP1} define a dichotomy between polynomial and hard cases for \texttt{G-DCST} when regarding $D$ as a family of oriented paths. It can be solved in polynomial time if length of the longest path in the underlying graph of $D$ is at most one, and it is \NP-Complete otherwise.

We also prove that many of the constrained spanning tree (CST)
problems that have been investigated in the literature can be modeled
with our general problem, namely the \textsc{Conflict
  CST}~\cite{DPS.09}, the \textsc{Forcing
  CST}~\cite{darmann2011paths}, the \textsc{At Least One/All
  Dependency CST}~\cite{vianaCampelo2019}, the \textsc{Maximum Degree
  CST}~\cite{deo1968shortest}, the \textsc{Minimum Degree
  CST}~\cite{de2012min}, and the \textsc{Fixed-Leaves Minimum Degree
  CST}~\cite{dias2017min}. All our reductions preserve the value of
the solutions, which means that also the optimization version of these
problems can be modeled within our framework.


Notice that the previously mentioned CST problems impose (vertex-wise
or edge-wise) local constraints to describe their set of feasible
spanning trees.  This contrasts with \textsc{Maximum Diameter
  CST}\cite{camerini1980complexity,camerini1983complexity}, \textsc{Minimum Diameter CST} \cite{hassin1995minimum,ho1991minimum} (with variations \cite{konemann2005approximating,BasteGPSST17}) and
\textsc{Maximum Leaves CST}~\cite{garey1979computers,lu1992power}, examples of
$\NP$-hard problems that impose constraints on global tree parameters.
In~\cite{DK.97}, the authors propose an approach that includes also
these global constraints, but from a practical point of view.

\paragraph{Related work.} In what follows, we talk sometimes about the
feasibility version of the problems, and sometimes about the
optimization version, where also a weight function on the edges of the
input graph is given. Also, when $\ell$ and/or $u$ are constant
functions, we write directly the constant value inside the parenthesis
when saying whether a spanning tree $(\ell,u)$-satisfies $D$.

\paragraph{{\it Conflict constraints:}}
Recall that, in the \textsc{Conflict Constrained (Minimum) Spanning
  Tree} problem, we are given a pair of graphs $G$ and $D$ such that
$V(D) = E(G)$, and we want to know whether there exists a spanning
tree (find a minimum spanning tree) $T$ of $G$ such that
$E(T)\cap \dep(e)=\emptyset$ for every $e\in E(T)$. We denote the
feasibility problem by \texttt{CCST} and the optimization problem by
\texttt{CCMST}.  Note that, if we consider $D'$ as an arbitrary orientation of $D$ (i.e., each edge
$e_1e_2$ in $D$ gives rise to either $(e_1, e_2)$ or $(e_2, e_1)$ in
$D'$), then we get that such a tree exists if and only if there exists
a spanning tree $T$ that $(0,0)$-satisfies $D'$. This means that our
problem generalizes this one and therefore inherits the $\NP$-complete
results, as well as might help with some polynomial cases. Also,
observe that the problems related to results~\ref{NP2} when
$\ell=0$,~\ref{NP3} and~\ref{P2} can be seen as generalizations of the
conflict constrained problems in the sense that $\ell=0$ (i.e., no
lower bound constraint is imposed), but $u\neq 0$.

Problems \texttt{CCST} and \texttt{CCMST} have been introduced in~\cite{darmann2011paths}, where \texttt{CCMST} is proved to be
polynomial-time solvable if the conflict graph is a matching, and \texttt{CCST} is proved to be
$\NP$-complete if the conflict graph is a forest of paths of length at
most two.
From what is said previously,
we then get that \texttt{G-DCMST}$(G, D, 0, 0, w)$ is polynomial when $D$ is
an oriented matching, and \texttt{G-DCST}$(G,D,0,0)$ is $\NP$-complete
when $D$ is an orientation of a forest of paths of length at most two.
When $D$ is one of these digraphs, observe that $\Delta^-(D)\leq
2$. If $\Delta^-(D) < 2$, the other possibilities for constant
values\footnote{This means that $\ell(e)=k$ and $u(e)=k'\geq k$, for
  all $e\in E(G)$ such that $\dep(e)\neq\emptyset$, and
  $\ell(e)=u(e)=0$ if $\dep(e)=\emptyset$.} of $\ell,u$ are $\ell=0$
and $u=1$, which is trivially polynomial, and
$\ell(e) = u(e) = |\dep(e)|$ for every $e\in E(G)$. For the latter
case, we have results~\ref{NP1} and~\ref{P1}, which leaves open only
the complexity of the optimization problem when $D$ is an oriented
matching. On the other hand, for $\Delta^-(D)=2$, i.e., $D$ contains a
forest of in-stars with at most two leaves, result~\ref{NP1} shows
$\NP$-completeness when $\ell(e) = u(e) = |\dep(e)|$ for every
$e\in E(G)$, the other values of $\ell$ and $u$ remaining open.

The \texttt{CCST} and \texttt{CCMST} problems have also been investigated
in~\cite{zhangKabadiPunnen2011}, where the authors prove that, if the
input graph $G$ is a cactus, then \texttt{CCST} is
polynomial, while \texttt{CCMST} is still
$\NP$-hard. They further show that the optimization problem is
polynomial if the conflict graph $D$ can be turned into a collection
of cliques by the removal of a constant number of vertices, i.e.,
there exists a subset $E'\subseteq E(G) = V(D)$ such that $D-E'$ is a
collection of cliques, and $|E'|$ is bounded by a constant. We prove
something similar here for the generalized problem (result~\ref{P2}).

In~\cite{KLM.13}, the authors investigate a conflict constrained
problem where the conflict graph is only allowed to contain an edge
$ee'$ if $e$ and $e'$ share an endpoint in $G$ (they called
these \emph{forbidden transitions}). Among other results, they prove
that the feasibility problem is $\NP$-complete even if the input graph
$G$ is a complete graph. Practical approaches to the conflict
constrained problem have been presented
in~\cite{samerUrrutia2015,CCPR.18,zhangKabadiPunnen2011}.

Another interesting, recently defined, problem that can be modeled as
a conflict constrained spanning tree problem (and therefore, as a
special case of \texttt{G-DCST}) is the so-called \textsc{Angular
  Constrained Spanning Tree} problem~\cite{AK.17}. In this problem, we
are given a set $V$ of points on the plane, a graph $G=(V,E)$, and an
angle $\alpha$. A spanning tree $T$ is called an
\emph{$\alpha$-spanning tree} if, for every point $v\in V$, there is
an angle on $v$ of size smaller than $\alpha$ containing all the edges
(line segments) of $T$ incident to $v$. Observe that, if we let $D$
contain an arc $(vu,vw)$ whenever the smaller angle formed by $vu$ and
$vw$ is bigger than $\alpha$, then an $\alpha$-spanning tree $T$ also
$(0,0)$-satisfies $D$, and vice-versa. Besides, the conflicts in this
case are forbidden transitions. In \cite{AK.17}, one can find
references on the decision version of the problem, while the
optimization version is investigated in~\cite{CL.19}.

\paragraph{{\it Forcing constraints:}}
Recall that, in the \textsc{Forcing Constrained (Minimum) Spanning Tree}
problem, we are given a pair of graphs $G$ and $D$ such that
$V(D) = E(G)$, and we want to know whether there exists a spanning
tree (find a minimum spanning tree) $T$ of $G$ such that
$E(T)\cap \{u,v\}\neq\emptyset$ for every $uv\in E(D)$. We denote the
feasibility problem by \texttt{FCST} and the optimization problem by
\texttt{FCMST}.

This problem was introduced in~\cite{darmann2011paths}, where the authors
prove that \texttt{FCST} is $\NP$-complete even if the conflict graph
is a forest of paths of length at most two.  To the best of our knowledge, this is the only existing paper that investigates
this problem. Here, we show a reduction
from \texttt{FCST}$(G,D)$ to \texttt{G-DCST}$(G', D', \ell,u)$, where
$\ell(e)\in \{0,1\}$ and $u(e)=|\dep(e)|$ for every $e\in E(G')$, and
the maximum in-degree of $D'$ is~$2$. If weights are being considered, 
such a reduction can be made to preserve the value of the solutions, and
therefore it also applies to the optimization problem.

\paragraph{{\it At least one/all dependency constraints:}}
The following two dependency
constrained problems are introduced in~\cite{vianaCampelo2019}. Given a graph $G$ and a digraph $D$ such that
$V(D) = E(G)$, one wants to know whether there exists a spanning tree
$T$ of $G$ such that: $E(T)\cap \dep(e)\neq \emptyset$ for every $e\in E(T)$
with $\dep(e) \neq \emptyset$, called the \textsc{At Least One
Dependency Constrained Spanning Tree} problem; or $\dep(e)\subseteq E(T)$ for every
$e\in E(T)$, called the \textsc{All Dependency Constrained Spanning Tree} problem. We
denote these problems by \texttt{L-DCST} and \texttt{A-DCST}, and the
related optimization problems by \texttt{L-DCMST} and
\texttt{A-DCMST}, respectively. Note that these are special cases of
our problem.

In~\cite{vianaCampelo2019}, it is proved that both \texttt{L-DCST} and \texttt{A-DCST} are
$\NP$‐complete, even if $G$ is a planar chordal graph with diameter two
or maximum degree three, and $D$ is the disjoint union of arborescences of
height two. Here, we strengthen the constraints on $D$, while also
getting a lower bound on the running time of exponential algorithms
for these problems (result~\ref{NP1}). Observe that this result
comprises cases where the maximum in-degree of $D$ is one, and so the
generalized problem coincides with both \texttt{L-DCST} and
\texttt{A-DCST}.

Still in~\cite{vianaCampelo2019}, the authors prove that \texttt{L-DCMST} and
\texttt{A-DCMST} are $\W[2]$-hard when parameterized by the weight of
a solution, and that, unless $\P=\NP$, they cannot be approximated
with a ratio of $\ln |V(G)|$ even if: $G$ is bipartite; the dependency
relations occur only between adjacent edges of $G$; and each weak
component of $D$ has diameter one. One can notice that the weight of a
solution in their $\W[2]$-hardness reduction is $O(n)$, where
$n=|V(G)|$. This means that there is no \FPT algorithm for
\texttt{L-DCMST} and \texttt{A-DCMST} parameterized by $n$, unless
$\FPT=\W[1]$. This contrasts with the decision problem, which can be
solved in time $O(2^{m}\cdot (n+m)) = O(2^{n^2}\cdot n^2)$, where
$m=|E(G)|$ (result~\ref{P3}).

\paragraph{{\it Maximum degree constraints:}}
Given a graph $G = (V, E)$, and a positive integer $k$, the
\textsc{Maximum Degree Constrained Spanning Tree} problem consists in
deciding whether $G$ has a spanning tree $T$ such that $d_T(v)\le k$
for every $v\in V(G)$, where $d_T(v)$ is the degree of $v$ in $T$. This problem was introduced
in~\cite{deo1968shortest}. Observe that it is $\NP$-complete, even for $k=2$,
since this case generalizes
the Hamiltonian path problem~\cite{garey1979computers}. In
\cite{papadimitriou1984two}, it is proved to be $\NP$-complete even
for grid graphs of maximum degree three.  Also,
\cite{papadimitriou1984two} tackles the Euclidean optimization
version of the problem (i.e., vertices are points on the plane, and
edges are weighted according to the Euclidean distance). The Euclidean
optimization version remains $\NP$-hard when $k \leq 3$, and is polynomial-time solvable when $k \geq 5$, remaining open
for $k=4$.  Several heuristic, approximation, and exact
approaches have been proposed for the problem (see
\cite{Krishnamoorthy2001,Singh2015,Bicalho2016} and references
therein).

Here, we denote the feasibility version of this problem by
\texttt{MDST}, and the optimization version by \texttt{MDMST}. We
present a reduction from \texttt{MDST}$(G,k)$ to
\texttt{G-DCST}$(G',D,0,u)$ where $u(e)\in \{0,k\}$ for every
$e\in E(G')$. The reduction also applies to the optimization problem
since it preserves the value of the solutions.

\paragraph{{\it Minimum degree constraints:}}
Given a graph $G = (V, E)$, and a positive integer $k$, the
\textsc{Minimum Degree Constrained Spanning Tree} problem consists in deciding whether $G$ has a spanning tree $T$
such that $d_T(v)\ge k$ for every non-leaf vertex $v$ of $T$. Here, we
denote the feasibility version of this problem by \texttt{mDST}, and
the optimization version by \texttt{mDMST}.  This problem was
introduced in \cite{de2012min}, where it is shown to be $\NP$-hard for
every $k\in \{4,\cdots, \frac{|V(G)|}{2}\}$.  On the other hand,
\cite{de2012min} proves that the problem can be solved (by inspection)
for degree bounds between $\frac{|V(G)|}{2} + 1$ and $|V(G)| - 1$. In
\cite{de2010md}, the problem was shown to be $\NP$-hard for $k=3$. The case $k\leq 2$ is equivalent to the classical spanning tree problem.
Integer linear programs and solution methods were
proposed in~\cite{AKGUN2010,de2012min,MARTINEZ2014}.

An interesting variation of \texttt{mDST} is obtained when the set of
leaves is fixed in the input. More formally, given a graph $G$, a
subset $C \subseteq V$, and a positive integer $k$, it consists in
finding a spanning tree $T$ of $G$ such that $d_T(v) \geq k$, for
every $v \in C$, and $d_T(v) = 1$, for every $v \in V \setminus C$. We
denote the feasibility version of this problem by \texttt{FmDST}, and
the optimization version by \texttt{FmDMST}.  This problem was
introduced in \cite{dias2017min}, where the authors prove that
\texttt{FmDST} is $\NP$-complete for $k \geq 2$, and
\texttt{FmDMST} is $\NP$-hard even for complete graphs. Also, some
necessary and sufficient conditions are given for feasibility.

Here, we present a reduction from both \texttt{mDST}$(G,k)$ and
\texttt{FmDST}$(G, C, k)$ to \texttt{G-DCST}$(G', D, \ell, u)$ where
$\ell(e)\in \{0,1,k\}$ and $u(e) \in \{1, |\dep(e)|\}$ for every
$e\in E(G)$. Again, our reduction preserves the values of the
solutions and therefore works for the optimization version as well.

\paragraph{{\bf Applications.}}
As $\texttt{G-DCST}$ generalizes all these problems, it inherits their
applications, such as design of wind farm networks
\cite{carrabs2019multiethnic}, VLSI global routing
\cite{robins1995low}, or low-traffic communication networks
\cite{MARTINEZ2014}.  In particular, dependency relations can model
communication systems with protocol conversion restrictions
\cite{Viana16}.  Besides, we can get unified results for all of them
by considering $\texttt{G-DCST}$.

\paragraph{{\bf Organization.}}
In Section~\ref{sec:defs}, we
present the formal definitions and notation used throughout the paper;
in Section~\ref{sec:np} we present our $\NP$-complete results; in
Section~\ref{sec:poly}, our positive results; in
Section~\ref{sec:models}, we show how to model the many constrained
spanning tree problems as special cases of our problem; and in
Section~\ref{sec:conclusion}, we discuss our results and pose some
open questions.

\section{Definitions and notation}\label{sec:defs}

\paragraph{Graphs.} For missing basic definitions on graph theory,
  we refer the reader to~\cite{W96}. Let $G$ be a simple graph (henceforth called simply a graph), and $D$
be a digraph. We denote
by $E(G),E(D)$ the edge set of $G$ and arc set of $D$,
respectively. Also, we denote an edge $\{u,v\}$ of $G$ by $uv$, and
arc with head $v$ and tail $u$ of $D$ by $(u,v)$.  We say that $D$ is
\emph{symmetric} if $(v,u)\in E(D)$ whenever $(u,v)\in E(D)$.  A
(di)graph $G$ ($D$) is \emph{complete} if $uv\in E(G)$
($\{(u,v),(v,u)\}\subseteq E(D)$) for every pair of vertices $u$ and
$v$ in $G$ ($D$). It is \emph{empty} if has no edges (no arcs).

If $C\subseteq V(G)$ is such that $uv\in E(G)$ for every $u,v\in V(G)$, $u\neq v$, then we call $C$ a \emph{clique}. And if there are no edges between vertices in $C$, we say that $C$ is an \emph{independent set}. 
A vertex $v\in V(G)$ is called \emph{universal} if $N(v) = V(G)\setminus \{v\}$, where $N(v)$ stands for the set of neighbors of $v$. A tree $T$ is called a \emph{star} if it has a universal vertex $v$, called \emph{center}. 
Similarly, an \emph{out-star} (\emph{in-star}) is a directed graph $D$
with a vertex $v$ such that any other vertex is an out-neighbor
(in-neighbor) of $v$ and $V(D)\setminus \{v\}$ is an independent set.

\paragraph{Definition of the problems.} Let $G = (V, E)$ be a graph and $D = (E, A)$ be a digraph whose
vertices are the edges of $G$.  We say that $e_1 \in E$ is a
\emph{$D$-dependency} of $e_2 \in E$ if $(e_1, e_2) \in A$.  For
each $e \in E$, we define its \emph{$D$-dependency set} as
$\dep_D(e) = \{e' \in E : (e', e) \in A\}$, and for $E' \subseteq E$,  let
$\dep_D(E') = \cup_{e \in E'} \dep_D(e)$; from now
on we omit $D$ from the subscript whenever it is clear from the
context.  Also, let $\ell, u : E \to \mathbb{N}$ be functions that
assign a non-negative integer to each edge of $G$.  We say that a
subgraph $H$ of $G$ \emph{$(\ell, u)$-satisfies $D$} if
$\ell(e) \leq |\dep(e) \cap E(H)| \leq u(e)$, for every $e \in E(H)$.

We introduce the \textsc{Generalized Dependency Constrained Spanning
  Tree} problem as, given a graph $G$, a digraph $D = (E(G), A)$, and functions $\ell,u:E(G)\rightarrow \mathbb{N}$, deciding whether there exists a spanning tree $T$ of  $G$ such that $T$ $(\ell, u)$-satisfies $D$.  We abbreviate this with \texttt{G-DCST}$(G, D, \ell, u)$. Observe that it
corresponds to the feasibility problem. If we are also given a weight function
$w : E \to \mathbb{R}$, then we define the \textsc{Generalized
  Dependency Constrained Minimum Spanning Tree} problem as the problem
of finding a spanning tree $T^*$ of $G$ that minimizes the weight sum
and that $(\ell, u)$-satisfies $D$; this problem is denoted by
\texttt{G-DCMST}.

\paragraph{Polynomial reductions and Exponential Time Hypothesis.} Given problems $\Pi$ and $\Pi'$, we write $\Pi\preceq_\P \Pi'$ if
there exists a polynomial reduction from $\Pi$ to $\Pi'$. This means
that problem $\Pi'$ is at least as hard as problem $\Pi$.  The
\emph{Exponential Time Hypothesis} (denoted by \ETH) of Impagliazzo et
al.~\cite{ImpagliazzoP01,ImpagliazzoP01-ETH} states that the
\texttt{3-SAT} problem cannot be solved in time $2^{o(n+m)}$, where
$n$ is the number of variables and $m$ the number of clauses of the
input formula. In particular, if it is possible to reduce
\texttt{3-SAT} to problem $\Pi$ and the produced instance has size
linear in the size of the input formula, then the \ETH implies that
problem $\Pi$ cannot be solved in time $2^{|x|}$ either, where $|x|$
denotes the size of the input of $\Pi$. We refer the reader to~\cite{CompComplBook} for basic background on computational complexity. 



\paragraph{Parameterized complexity.} We refer to~\cite{FPTBook} for a recent monograph on parameterized complexity. Here, we recall only some basic definitions.
A \emph{parameterized problem} is a decision problem whose instances are pairs $(x,k) \in \Sigma^* \times \mathbb{N}$, where $k$ is called the \emph{parameter}.
A parameterized problem $L$ is \emph{fixed-parameter tractable} (\FPT) if there exists an algorithm ${\cal A}$, a computable function $f$, and a constant $c$ such that, given an instance $I=(x,k)$ of $L$, we get that 
${\cal A}$ (called an {\sf FPT} \emph{algorithm})  correctly decides whether $I \in L$  in time bounded by $f(k) \cdot |I|^c$. For instance, the \textsc{Vertex Cover} problem parameterized by the size of the solution is \FPT.



Within parameterized problems, the class {\sf W}[1] may be seen as the parameterized equivalent to the class \NP\ of classical optimization problems. Without entering into details (see~\cite{FPTBook} for the formal definitions), a parameterized problem being {\sf W}[1]-\emph{hard} can be seen as a strong evidence that this problem is {\sl not} \FPT. The canonical example of {\sf W}[1]-hard problem is \textsc{Independent Set} parameterized by the size of the solution.
The class {\sf W}[2] of parameterized problems is a class that contains $\W$[1], and such that the problems that are {\sf W}[2]-\emph{hard} are  even more unlikely to be \FPT than those that are {\sf W}[1]-hard (again, see~\cite{FPTBook} for the formal definitions). The canonical example of {\sf W}[2]-hard problem is \textsc{Dominating Set} parameterized by the size of the solution.

\paragraph{Matroids.} We state here some basic tools about matroids that we will use in the algorithms of Section~\ref{sec:poly}, and we refer to~\cite{Oxley-matroids,lawler1976combinatorial} for more background. A (finite) \emph{matroid} $M$ is a pair $(E,\mathcal{I})$, where $E$ is a finite set, called the \emph{ground set}, and $\mathcal{I}$ is a family of subsets of $E$, called the \emph{independent sets}, satisfying the following properties:

\begin{enumerate}
    \item The empty set is independent, that is, $\emptyset \in {\cal I}$.
    \item Every subset of an independent set is independent, that is, for each $A' \subseteq A \subseteq E$, if $A \in {\cal I}$ then $A' \in {\cal I}$. This is called the \emph{hereditary property}.
    \item If $A,B \in {\cal I}$ with $|A|>|B|$, then there exists $x \in A \setminus B$ such that $B \cup \{x\} \in {\cal I}$. This is called the \emph{augmentation property}.
\end{enumerate}

Every graph or multigraph $G=(V,E)$ gives rise to a so-called \emph{graphic matroid} having $E$ as ground set, and a set $F \subseteq E$ is independent if and only if $G[F]$ is acyclic. 

Given a collection ${\cal E} = \{E_1, E_2, \ldots, E_k\}$ of pairwise
  disjoint sets, and integers $\{d_1,\ldots,d_k\}$ such that
  $0\le d_i \leq |E_i|$ for every $i \in [k]= \{1,\ldots,k\}$, the \emph{partition
    matroid} with
  ground set $E = \bigcup_{i=1}^k E_i$ has $S\subseteq E$ as an
  independent set if and only if $|S\cap E_i|\le d_i$ for every
  $i\in [k]$.

The Matroid Intersection Theorem, proved by Edmonds~\cite{edmonds1979matroid}, states that the problem of finding a largest common independent set of two matroids over the same ground set can be solved in polynomial time.


\section{$\NP$-completeness results}\label{sec:np}

In this section, we present our $\NP$-complete results. First, we
impose in Section~\ref{sec:AtleastAll} constraints on the structure of $D$ (also getting constraints
on $G$ as a byproduct), and then we focus in Section~\ref{sec:constraints-functions}
on hardness results imposing constraints on
functions $\ell$ and $u$.

\subsection{Constraints on $D$}\label{sec:AtleastAll}

We prove result~\ref{NP1}. First, we consider the case where $D$ is a forest of
out-stars with at most three vertices. Later we show that the orientation of $D$ in the reduction can be
changed to get a forest of directed paths of length at most two. In addition,  in Theorem~\ref{thr:complexityDCSTinstars}, we modify the reduction to obtain $D$ as a forest of in-stars with at most three vertices, thus closing all the possible orientations of forest of paths of length at most two.

\begin{theorem} \label{thr:complexityDCSTstars}
  \texttt{G-DCST}$(G,D,\ell,u)$ is $\NP$-complete, even when
  $\ell(e) = u(e) = |\dep(e)|$ for every $e\in E(G)$, $D$ is a forest
  of out-stars of maximum degree two, and $G$ is an outerplanar chordal
  graph with diameter at most two. Furthermore, this problem cannot be
  solved in time $2^{o(n+m)}$ unless the \ETH fails, where $n=|V(G)|$ and
  $m=|E(G)|$.
\end{theorem}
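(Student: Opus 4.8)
The plan is to reduce from $(3,2,2)$\texttt{-SAT} — the variant of \texttt{3-SAT} where every clause has exactly $3$ literals and every variable occurs in exactly $2$ positive and $2$ negative literals — which is known to be $\NP$-complete and, under the \ETH, not solvable in time $2^{o(n+m)}$, with instances of linear size. Given a formula $\varphi$ with variables $x_1,\dots,x_n$ and clauses $C_1,\dots,C_m$, I would build a graph $G$ together with a dependency digraph $D$ so that $G$ is outerplanar, chordal, and of diameter at most two; the natural way to force diameter two while staying outerplanar and chordal is to take $G$ to be essentially a large \emph{fan} or a collection of triangles all sharing a common apex vertex $r$. Every ``gadget'' edge of the construction will be a triangle $\{r, a, b\}$ hanging off the apex, which keeps the graph chordal and outerplanar, and guarantees that in any spanning tree $T$ the edge $ra$ or $rb$ is present for each gadget. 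The key modeling trick — exploiting that $\ell(e)=u(e)=|\dep(e)|$, i.e. the ``All Dependency'' regime — is that an edge $e$ can be in $T$ \emph{only if all} of its dependencies are also in $T$; so one designs chains of forced inclusions/exclusions that propagate a truth assignment.

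The core of the construction is as follows. For each variable $x_i$ I would create a small constant-size gadget, built from a few apex-triangles, whose feasible spanning-tree restrictions correspond bijectively to the two truth values of $x_i$; concretely, one triangle edge being ``in the tree'' encodes $x_i = \texttt{true}$ and an alternative edge encodes $x_i = \texttt{false}$, and a dependency arc from one to the other (an out-star of degree one, i.e.\ just an arc) forbids picking both. For each clause $C_j = (\ell_{j1} \vee \ell_{j2} \vee \ell_{j3})$ I would create a clause gadget (again a bounded number of apex-triangles) and place out-star arcs from the three ``literal-satisfied'' edges of the variable gadgets into a distinguished clause edge, arranged so that the clause edge — or some edge forced to be in every spanning tree — can only belong to a valid tree if at least one literal edge is present. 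Since every variable occurs a bounded number of times ($2+2$), the out-degree of each literal edge in $D$ is bounded; by splitting arcs through auxiliary ``copy'' edges (each copy having a single in-arc) one can drive the maximum degree of $D$ down to two, so that $D$ is genuinely a forest of out-stars with at most three vertices each. I must double-check that $\dep(e)$ for every clause/auxiliary edge has size exactly matching the count used in $\ell=u=|\dep(e)|$, and that edges with $\dep(e)=\emptyset$ (which then satisfy $D$ vacuously) do not accidentally provide an escape route; this bookkeeping, together with verifying acyclicity/connectivity of the spanning tree in both directions of the equivalence, is the main routine burden.

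The two structural refinements stated in the theorem are then easy riders. Outerplanarity and chordality hold by construction since $G$ is a union of triangles through a common apex (a ``friendship''-like graph, which is outerplanar and chordal), and diameter at most two holds because every vertex is adjacent to the apex $r$. For the \ETH lower bound, I would verify that the construction is linear: each of the $n$ variable gadgets and each of the $m$ clause gadgets contributes $O(1)$ vertices and edges, and the arc-splitting adds only $O(1)$ per occurrence, so $|V(G)| + |E(G)| = O(n+m)$; hence a $2^{o(n+m)}$ algorithm for \texttt{G-DCST} would yield a $2^{o(n+m)}$ algorithm for $(3,2,2)$\texttt{-SAT}, contradicting the \ETH. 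The main obstacle I anticipate is the simultaneous satisfaction of \emph{all} the promised restrictions at once — forcing $D$ to be a forest of out-stars of degree at most two while $G$ stays outerplanar, chordal, and of diameter two — because the apex vertex $r$ that buys diameter two also threatens outerplanarity (a $K_{2,3}$ or $K_4$ minor through $r$ must be scrupulously avoided), so the precise wiring of the gadgets around $r$ is where the real care is needed; equivalently one may have to route dependencies so that no two triangles share an edge other than at $r$, and the clause gadgets must be attached as pendant triangles rather than as denser subgraphs.
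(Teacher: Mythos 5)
Your high-level route is the same as the paper's (a linear-size reduction from $(3,2,2)$\texttt{-SAT}, a universal apex vertex to get an outerplanar chordal graph of diameter two, variable and clause gadgets, and the $2+2$ occurrence bound giving $\Delta^+(D)\le 2$), but the two central gadget mechanisms as you describe them would fail under the regime $\ell(e)=u(e)=|\dep(e)|$. First, in your variable gadget you claim that ``a dependency arc from one to the other forbids picking both'': it does not. With $\ell=u=|\dep|$, an arc $(e,e')$ only forces $e\in E(T)$ whenever $e'\in E(T)$; dependencies can never \emph{exclude} an edge. Mutual exclusion of the true/false edges has to come from the spanning-tree structure itself: in the paper, $vv_x$ and $vv_{\bar x}$ form a triangle with $v_xv_{\bar x}$, and $v_xv_{\bar x}$ is forced into every feasible tree because the pendant cut edge $vw_x$ (which lies in every spanning tree) carries the arc $(v_xv_{\bar x},vw_x)$; the cut $\{vv_x,vv_{\bar x}\}$ then gives ``at least one'' and the triangle gives ``at most one''.

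Second, your clause gadget encodes the wrong Boolean connective. Placing arcs from the three literal edges into one distinguished clause edge means that choosing that clause edge forces \emph{all three} literal edges (a conjunction), not at least one; it also gives that vertex in-degree three, so $D$ is no longer a forest of out-stars, and routing the arcs through ``copy'' edges lowers the degree but does not change the conjunction semantics. The disjunction must instead be realized by the tree structure: the paper attaches to the apex a path $(v_c^1,v_c^2,v_c^3)$ so that the three edges $vv_c^1,vv_c^2,vv_c^3$ form a cut (hence at least one is in any spanning tree), and each $vv_c^i$ carries a \emph{single} in-arc from its literal edge, so whichever clause edge is chosen forces its literal to be true. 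Without these two cut-based mechanisms your construction, as sketched, does not yield the claimed equivalence, even though the surrounding claims (outerplanarity, chordality, diameter two via the apex, out-degree at most two from the occurrence bound, and the linear-size \ETH argument) are all fine and match the paper.
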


\begin{proof}
  Notice that, given a spanning tree $T$, one can check whether $T$
  $(\ell,u)$-satisfies $D$ in polynomial time; hence,
  \texttt{G-DCST}$(G, D,\ell,u)$ is in $\NP$. To prove
  $\NP$-completeness, we present a reduction from \texttt{$(3,2,2)$-SAT}
  to \texttt{G-DCST}.  In the \texttt{$(3,2,2)$-SAT} problem, we are
  given a CNF formula $\phi$ where each clause has at most three
  literals, and each variable appears at most twice positively and at
  most twice negatively. This problem is well-known to be
  $\NP$-complete~\cite{cook1971complexity,garey1979computers}.  So
  consider a CNF formula $\phi$ on $n$ variables and $m$ clauses; we
  build an instance $(G, D, \ell,u)$ of \texttt{G-DCST} as follows
  (follow the construction in Figure \ref{fig:strongerReduction}):

  \begin{itemize}
  \item Add to $G$ vertex $v$, and vertices $v_x$, $v_{\bar x}$, and
    $w_x$ related to each variable $x$, and at most three vertices
    $\{v_c^1,v_c^2,v_c^3\}$ related to each clause $c$ (these vertices
    represent the literals in $c$). Then, make $v$ adjacent to every
    other vertex; $v_x$ adjacent to $v_{\bar x}$ for every variable
    $x$; and add for each clause $c$ a path $(v_c^1,v_c^2)$ or
    $(v_c^1,v_c^2,v_c^3)$, depending on how many literals $c$ has
    (observe that we can suppose that $c$ has at least two
    literals). Edge $vv_x$ will be interpreted as the true assignment of
    $x$, while edge $vv_{\bar x}$ as the false one.

  \item For each variable $x$, add arc $(v_xv_{\bar x}, vw_x)$ to
    $D$. For every variable $x$ and
    each occurrence of $x$ in a clause $c$, say as the $i$-th literal
    in $c$, add to $D$ arc $(vv_x,vv_c^i)$ if $x$ appears positively
    in $c$, or arc $(vv_{\overline{x}},vv_c^i)$ if $x$ appears
    negatively in $c$.  

  \item Finally, let $\ell(e)=u(e)=|\dep(e)|$ for every $e\in E(G)$.
\end{itemize}

\begin{figure}[bht]
  \centering
  \subfigure[Graph $G$.\label{fig:outstarG}]{
    \begin{tikzpicture}
      \tikzstyle{vertex}=[draw,shape=circle,minimum size=20pt,inner sep=0.5pt];

      \node[vertex] (u) at (1, 1) {$v_x$};
      \node[vertex] (v) [below = 1cm of u] {$v$};
      \node[vertex] (w) [above left = 0.4cm and 1cm of v] {$w_x$};

      \node[vertex] (vx) [above right = 0.4cm and 1cm of v] {$v_{\bar x}$};

      \node[vertex] (vc1) [below left = 1cm of v] {$v_c^1$};
      \node[vertex] (vc2) [below = 1cm of v] {$v_c^2$};
      \node[vertex] (vc3) [below right = 1cm of v] {$v_c^3$};

      \draw (u) edge (v);
      \draw (v) edge (w);
      \draw (u) edge (vx);
      \draw (v) edge (vx);

      \draw (v) edge (vc1);
      \draw (v) edge (vc2);
      \draw (v) edge (vc3);
      \draw (vc1) edge (vc2);
      \draw (vc2) edge (vc3);
    \end{tikzpicture}}
  \qquad
  \subfigure[Digraph $D$.\label{fig:outstarD}]{
    \begin{tikzpicture}
      \tikzstyle{vertex}=[draw,shape=circle,minimum size=20pt,inner sep=0.5pt];

      \node[vertex] (a) at (1, 1) {$v_xv_{\bar x}$}; \node[vertex] (al) [below = 1cm of a] {$vw_x$};

      \node[vertex] (ex) [right = 2cm of a] {$vv_x$};
      \node[vertex] (exci) [below left = 1cm of ex] {$vv_{c_1}^1$};
      \node[vertex] (excj) [below right = 1cm of ex] {$vv_{c_2}^1$};

      \node[vertex] (enx) [right = 2cm of ex] {$vv_{\bar x}$};
      \node[vertex] (enxck) [below = 1cm of enx] {$vv_{c_3}^1$};

      \draw (a) edge[->] (al); \draw (ex) edge[->] (exci); \draw (ex) edge[->] (excj);

      \draw (enx) edge[->] (enxck);
    \end{tikzpicture}}
  \caption{Illustration of the reduction from \texttt{$(3, 2, 2)$-SAT} in
    Theorem~\ref{thr:complexityDCSTstars}. In Figure~\ref{fig:outstarG},
    we represent a variable gadget together with a gadget of a clause
    containing three literals. In Figure~\ref{fig:outstarD}, for a variable
    $x$, we represent the dependency between $v_xv_{\bar x}$ and
    $vw_x$, and also the arcs leaving $vv_x$ and $vv_{\bar x}$ when
    $x$ appears positively in $c_1$ and $c_2$, and negatively in
    $c_3$, being related to the first literal in each of these
    clauses.}
  \label{fig:strongerReduction}
\end{figure}
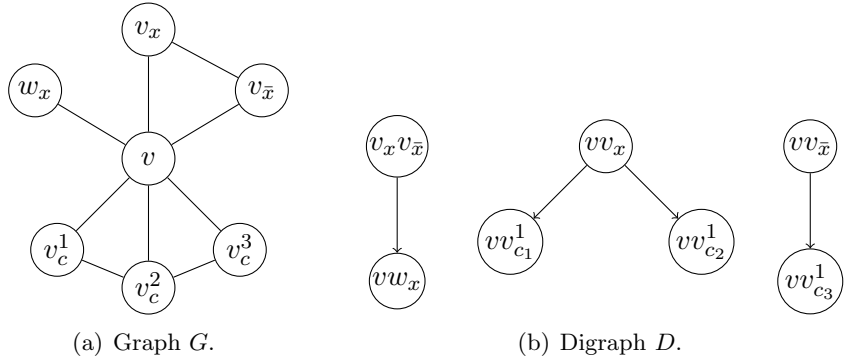

One can see that $G$ is an outerplanar chordal graph, and that each
component (different from the one containing $v_xv_{\overline{x}}$ and
$vw_x$) of $D$ is an out-star from $vv_x$ or $vv_{\bar x}$, for some
variable $x$; we get $\Delta^+(D)\le 2$ by the constraint in the
number of appearances of a literal.

To show the correctness of the reduction, consider first a satisfying
assignment of $\phi$. We build a spanning tree $T$ of $G$ with the following edges: for each
variable $x$, add to $T$ $v_xv_{\bar x}$, $vw_x$, and either $vv_x$ if
$x$ is true, or $vv_{\bar x}$, if $x$ is false; for each clause
$c= (\ell_1 \vee \ell_2 \vee \ell_3)$, add path $(v^1_c,v^2_c,v^3_c)$
and an edge $vv_c^i$ for some $i \in \{1, 2, 3\}$ such that $\ell_i$ is a true literal
(analogously when $c$ has only two literals).  Because, for every
variable $x$, $vw_x$, $v_xv_{\bar x}$ and exactly one edge among
$\{vv_x,vv_{\bar x}\}$ are chosen, and for every clause $c$ exactly
one edge among $\{vv_c^1,vv_c^2,vv_c^3\}$ is chosen, apart from the
path $(v^1_c,v^2_c,v^3_c)$, one can see that $T$ is indeed a spanning
tree of $G$. The dependencies can also be seen to be satisfied since
we only choose an edge $vv_c^i$ if the corresponding literal is true
(hence the dependency is chosen too).

Conversely, let $T$ be a solution for \texttt{G-DCST}$(G, D,\ell, u)$.
For each variable $x$, because $vw_x$ is a cut edge and
$(v_xv_{\bar x},vw_x)\in E(D)$, we get that
$\{vw_x,v_xv_{\bar x}\}\subseteq E(T)$. Besides, for each variable
$x$, since $vv_x$ and $vv_{\bar x}$ form a cut and also form a cycle
with $v_xv_{\bar x}$, we get that exactly one between $vv_x$ and
$vv_{\bar x}$ is in $T$. We then assign $x$ to true if $vv_x\in E(T)$,
and to false otherwise.  Now, consider a clause
$c = (\ell_1 \vee \ell_2 \vee \ell_3)$; since the edges
$\{vv_c^1, vv_c^2, vv_c^3\}$ form a cut, at least one of them is in
$T$, say $vv_c^1\in E(T)$ and say that $x$ is the variable related to
$\ell_1$.  If $\ell_1=x$, then $(vv_x,vv_c^1)\in E(D)$, which implies
that $vv_x\in E(T)$ and that $x$ is true. And if
$\ell_1 = \overline{x}$, then $(vv_{\bar x},vv_c^1)\in E(D)$, which
implies that $vv_{\bar x}\in E(T)$ and that $x$ is false (therefore
$\ell_1$ is true). In any case, $c$ is satisfied. The case where $c$
has only two literals is analogous.

Finally, for the lower bound $2^{o(n+m)}$, just observe that the
constructed instance has size linear in the size of the given formula.
\end{proof}

Observe that when either $\Delta^-(D) = 0$ or $\Delta^+(D)=0$, we get
that $D$ is the empty graph, and that \texttt{G-DCST}$(G, D,\ell,u)$
reduces to deciding whether $G$ is connected. Also, by the previous
theorem, we get that the problem is
$\NP$-complete if $\Delta^-(D) = 1$, thus giving us a dichotomy with regard to the value of
$\Delta^-(D)$. Concerning $\Delta^+(D)$, the previous theorem tells us
that the problem becomes $\NP$-complete for $\Delta^+(D) = 2$. With a
small modification on the previous reduction, we can also get a
dichotomy with regard to $\Delta^+(D)$.

\begin{theorem} \label{thr:complexityGDCSTpaths}
  \texttt{G-DCST}$(G,D,\ell,u)$ is $\NP$-complete, even when
  $\ell(e) = u(e) = |\dep(e)|$ for every $e\in E(G)$, $D$ is a union
  of directed paths with length at most two, and $G$ is a chordal
  outerplanar graph with diameter two.  Furthermore, this problem cannot
  be solved in time $2^{o(n+m)}$ unless the \ETH fails, where
  $n=|V(G)|$ and $m=|E(G)|$.
\end{theorem}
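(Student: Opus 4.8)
The plan is to obtain Theorem~\ref{thr:complexityGDCSTpaths} by a small modification of the reduction proving Theorem~\ref{thr:complexityDCSTstars}: I would keep the same graph $G$ and the same bounds $\ell(e)=u(e)=|\dep(e)|$, and only re-route the arcs of $D$ so that every out-star of degree two becomes a consistently directed path of length two. In that reduction, the only vertices of $D$ with out-degree two are the assignment edges $vv_x$ (or $vv_{\bar x}$), pointing to the spokes of the (at most two) clauses where $x$ occurs positively (resp.\ negatively). So if $x$ occurs positively as the $i$-th literal of a clause $c$ and as the $j$-th literal of a clause $c'$, I would replace the two arcs $(vv_x,vv_c^i)$ and $(vv_x,vv_{c'}^j)$ by the directed path $vv_x \to vv_c^i \to vv_{c'}^j$, and symmetrically for negative occurrences. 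Everything else stays put: the variable gadgets, the clause gadgets, the bridges $vw_x$ and their arcs $(v_xv_{\bar x}, vw_x)$, and the functions $\ell,u$.

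First I would verify the structural claims. Since each spoke $vv_c^i$ is a single literal occurrence of a single variable, it lies in exactly one such chain, in exactly one role; hence $D$ decomposes into the length-one paths $v_xv_{\bar x} \to vw_x$, the positive chain rooted at $vv_x$ and the negative chain rooted at $vv_{\bar x}$ (each of length at most two, possibly trivial), plus isolated vertices for the gadget-path edges -- so $D$ is a union of directed paths of length at most two. The graph $G$ is unchanged, hence still outerplanar, chordal, and of diameter two, and $\ell(e)=u(e)=|\dep(e)|$ for every $e\in E(G)$.

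For correctness, the backward direction is essentially that of Theorem~\ref{thr:complexityDCSTstars}: the bridge $vw_x$ forces $v_xv_{\bar x}\in E(T)$, which with the triangle $v v_x v_{\bar x}$ forces exactly one of $vv_x,vv_{\bar x}$ into $T$, defining the assignment; for each clause some spoke $vv_c^i$ lies in $T$ (spoke cut), and since $\ell(vv_c^i)=u(vv_c^i)=1$ its unique dependency lies in $T$ -- if that dependency is again a spoke (length-two chain), I iterate once more to reach $vv_x$ or $vv_{\bar x}$ -- so the $i$-th literal of $c$ is satisfied. For the forward direction the clean move is to place spoke $vv_c^i$ in $T$ \emph{exactly} when the $i$-th literal of $c$ is true under the satisfying assignment (rather than one spoke per clause), and then complete each clause gadget to a spanning tree of the gadget using the appropriate gadget-path edges; this is always possible since each clause then gets between one and three spokes. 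With this rule each chosen spoke has its dependency present automatically: a first-occurrence spoke of a true positive literal $x$ depends on $vv_x\in E(T)$, and a second-occurrence spoke depends on the first-occurrence spoke, which is also a true literal of its clause and hence was placed in $T$.

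I expect the main obstacle to be exactly this forward direction: re-routing through a spoke introduces the new implication ``a second-occurrence spoke in $T$ forces the first-occurrence spoke in $T$'', which could a priori pin down two spokes of one clause and create a cycle. The resolution is the observation that forced spokes always correspond to true literals, hence at most three per clause, and that any $k\in\{1,2,3\}$ spokes of a clause gadget extend to a spanning tree of that gadget using $3-k$ suitably chosen gadget-path edges, so no cycle appears. The \ETH lower bound then follows as in Theorem~\ref{thr:complexityDCSTstars}, since the constructed instance has size linear in $|\phi|$ and \texttt{$(3,2,2)$-SAT} reduces from \texttt{3-SAT} with linear blow-up.
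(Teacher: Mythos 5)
Your proposal is correct and follows essentially the same strategy as the paper: reuse the reduction of Theorem~\ref{thr:complexityDCSTstars} verbatim on $G$, $\ell$, $u$, and only re-orient each out-star with two leaves into a directed path of length two, then re-verify both directions. The only divergence is in which path you form: the paper reverses one arc so that the assignment edge sits in the middle ($vv_c^i \to vv_x \to vv_{c'}^j$), which makes a chosen $vv_x$ force an extra spoke into the tree, and it repairs the forward direction by deleting some edges of the clause path $(v_c^1,v_c^2,v_c^3)$ to avoid cycles; you instead chain the two spokes ($vv_x \to vv_c^i \to vv_{c'}^j$), keep the assignment edges dependency-free, and repair the forward direction by putting a spoke in $T$ exactly when its literal is true and completing each clause gadget with the appropriate $3-k$ path edges. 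Both fixes are sound (in your version the backward direction needs the one extra iteration along the chain, which you handle), and your variant is, if anything, a cleaner write-up of the step the paper only sketches; the structural and \ETH claims carry over unchanged in either case.
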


\begin{proof}
  Consider the same construction from the Theorem~\ref{thr:complexityDCSTstars}, except
  that each out-star with two leaves is turned into a directed path of
  length two. Observe that if $(vv_c^i,uv_x,vv_{c'}^j)$ is a path in
  $D$, then the previous arguments might not work simply because we
  might be forced to pick edge $vv_c^i$ when variable $x$ is set to
  true (i.e., edge $uv_x$ is chosen). However, in this case we can
  remove some of the edges of the path $(v^1_c,v^2_c,v^3_c)$ in order
  to avoid cycles. A similar argument is made for out-stars containing
  a vertex of type $vv_x$.
\end{proof}




Recall that \texttt{L-DCST}$(G, D)$ and \texttt{A-DCST}$(G, D)$ denote
the dependency constrained spanning tree problem (\texttt{G-DCST})
where at least one dependency (if any exists) or all dependencies are
satisfied, respectively.  Also, note that, if $\Delta^-(D)\leq 1$,
then we get that \texttt{L-DCST}$(G, D)$ and \texttt{A-DCST}$(G, D)$
coincide with \texttt{G-DCST}$(G, D,\ell, u)$ by assigning
$\ell(e) = u(e) = |\dep(e)|$ for every $e\in E(G)$.  Thus, the
following corollary, which strengthens the results in
\cite{vianaCampelo2019}, is a direct consequence of the previous two
theorems.

\begin{corollary} \label{cor:LA-DCST}
  \texttt{L-DCST}$(G, D)$ and \texttt{A-DCST}$(G, D)$ are
  $\NP$-complete, even if $G$ is an outerplanar chordal graph with
  diameter two, and $D$ is the union of out-stars with
  $\Delta^+(D) = 2$, or the union of paths of lenght at
  most~two. Furthermore, these problems cannot be solved in time
  $2^{o(n+m)}$ unless the \ETH fails, where $n=|V(G)|$ and $m=|E(G)|$.
\end{corollary}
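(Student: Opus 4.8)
The plan is to obtain the corollary directly from Theorems~\ref{thr:complexityDCSTstars} and~\ref{thr:complexityGDCSTpaths}, exploiting the coincidence—noted in the paragraph preceding the statement—between \texttt{G-DCST}$(G,D,\ell,u)$ with $\ell(e)=u(e)=|\dep(e)|$ and the problems \texttt{L-DCST}$(G,D)$ and \texttt{A-DCST}$(G,D)$ whenever $\Delta^-(D)\le 1$. First I would spell out this coincidence. For \texttt{A-DCST}, the requirement $\dep(e)\subseteq E(T)$ for every $e\in E(T)$ is literally $|\dep(e)\cap E(T)|=|\dep(e)|$, which is exactly $(\ell,u)$-satisfaction with $\ell=u=|\dep|$, for \emph{any} $D$. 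For \texttt{L-DCST}, the requirement $E(T)\cap\dep(e)\neq\emptyset$ (when $\dep(e)\neq\emptyset$) asks for $|\dep(e)\cap E(T)|\ge 1$; but if $\Delta^-(D)\le 1$ then $|\dep(e)|\le 1$ for every $e\in E(G)$, so ``at least one dependency'' and ``all dependencies'' are the same condition, and it again matches $(\ell,u)$-satisfaction with $\ell=u=|\dep|$. Hence, on instances with $\Delta^-(D)\le 1$ the three problems have exactly the same set of yes-instances.

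Next I would observe that the two reductions constructed in the proofs of Theorems~\ref{thr:complexityDCSTstars} and~\ref{thr:complexityGDCSTpaths} already output instances $(G,D,\ell,u)$ with $\ell(e)=u(e)=|\dep(e)|$ and with $\Delta^-(D)\le 1$: in the out-star construction every vertex of $D$ is either a center (in-degree $0$) or a leaf (in-degree $1$), and in the directed-path construction every vertex of a path $a\to b\to c$ has in-degree at most $1$. Since these are precisely the instances witnessing hardness of \texttt{G-DCST}, the same polynomial-time transformations from \texttt{$(3,2,2)$-SAT} are also correct reductions to \texttt{L-DCST}$(G,D)$ and to \texttt{A-DCST}$(G,D)$. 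The structural guarantees on $G$ (outerplanar, chordal, diameter two) and on $D$ (a union of out-stars with $\Delta^+(D)=2$, respectively a union of directed paths of length at most two) are inherited verbatim, as is membership in \NP. Finally, since the produced instance has size linear in $|\phi|$, the \ETH-based lower bound $2^{o(n+m)}$ transfers as well, with $n=|V(G)|$ and $m=|E(G)|$.

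There is essentially no obstacle here: the statement is a genuine corollary. The only point requiring a moment of care is verifying that \texttt{L-DCST} (as opposed to \texttt{A-DCST}) really does collapse to the $\ell=u=|\dep|$ version, which is exactly where the hypothesis $\Delta^-(D)\le 1$ is used; once that observation is recorded, the reductions and the running-time lower bound of the two theorems apply unchanged and the corollary follows.
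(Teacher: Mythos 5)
Your proposal is correct and follows essentially the same route as the paper: the corollary is obtained by observing that when $\Delta^-(D)\le 1$ both \texttt{L-DCST} and \texttt{A-DCST} coincide with \texttt{G-DCST}$(G,D,\ell,u)$ for $\ell(e)=u(e)=|\dep(e)|$, and then noting that the constructions of Theorems~\ref{thr:complexityDCSTstars} and~\ref{thr:complexityGDCSTpaths} already produce such instances, so the hardness, the structural restrictions on $G$ and $D$, and the \ETH lower bound all transfer directly. Your explicit check that \texttt{L-DCST} (as opposed to \texttt{A-DCST}) needs the hypothesis $\Delta^-(D)\le 1$ is exactly the point the paper records before stating the corollary.
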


In~\cite{darmann2011paths}, it is shown that \texttt{CCMST} is
$\NP$-complete if the conflict graph is a forest of paths of length at
most two. An orientation of such a forest may lead to directed paths,
out-stars, or in-stars. The $\NP$-completeness of \texttt{G-DCST} in
the first two cases is proved in
Theorems~\ref{thr:complexityDCSTstars} and
\ref{thr:complexityGDCSTpaths}. The case of in-stars is approached
next.

\begin{theorem} \label{thr:complexityDCSTinstars}
  \texttt{G-DCST}$(G, D, \ell, u)$ is \NP-complete, even when
  $\ell(e) = u(e) = |\dep(e)|$ for every $e \in E(G)$, $D$ is a forest
  of in-stars of maximum in-degree two, and $G$ is an outerplanar
  chordal graph with diameter at most two. Furthermore, this problem
  cannot be solved in time $2^{o(n+m)}$ unless the \ETH fails, where
  $n=|V(G)|$ and $m=|E(G)|$.
\end{theorem}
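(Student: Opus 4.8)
The plan is to keep the variable gadget of Theorem~\ref{thr:complexityDCSTstars} untouched --- the single arc $(v_xv_{\bar x},vw_x)$ is already a two-vertex in-star, just as it is a two-vertex out-star --- and to redesign the clause gadget so that every remaining arc is oriented \emph{towards} a literal edge. Starting from a \texttt{$(3,2,2)$-SAT} formula $\phi$, for each variable $x$ we keep the vertices $v,v_x,v_{\bar x},w_x$, the edges $vv_x$ (``$x$ true''), $vv_{\bar x}$ (``$x$ false''), $v_xv_{\bar x}$, $vw_x$, and the arc $(v_xv_{\bar x},vw_x)$. For each clause $c$ (which, as in Theorem~\ref{thr:complexityDCSTstars}, we may assume has two or three literals) we add two new vertices $a_c,b_c$ and the triangle consisting of $e_c^1=va_c$, $e_c^2=vb_c$, and $e_c^3=a_cb_c$, where $e_c^j$ is associated with the $j$-th literal of $c$. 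For that literal: if it is a positive occurrence of a variable $x$, add the arc $(e_c^j,vv_{\bar x})$; if it is a negative occurrence of $x$, add the arc $(e_c^j,vv_x)$. If $c$ has only two literals, also attach a pendant vertex $z_c$ to $v$ and add the arc $(a_cb_c,vz_c)$. Finally set $\ell(e)=u(e)=|\dep(e)|$ for every $e\in E(G')$. Note that merely reversing the arcs of the reduction of Theorem~\ref{thr:complexityDCSTstars} would not work: there a clause is satisfied because a clause-position edge put into $T$ forces its literal edge into $T$, and with in-stars that implication runs the wrong way, so the new gadget argues by contradiction instead, using that a triangle cannot be fully contained in a tree.

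First I would verify the claimed structure. Every edge with a nonempty dependency set --- each $vw_x$, each $vz_c$, and the edges $vv_x$ and $vv_{\bar x}$ --- is the center of an in-star whose leaves appear in no other dependency set, and since each variable occurs at most twice positively and at most twice negatively, $|\dep(vv_x)|\le 2$ and $|\dep(vv_{\bar x})|\le 2$; hence $D$ is a forest of in-stars of maximum in-degree two. The graph $G'$ is a union of triangles and pendant edges all meeting at the single vertex $v$, which is universal, so $G'$ is outerplanar, chordal, and of diameter at most two.

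For correctness, the variable gadget behaves exactly as in Theorem~\ref{thr:complexityDCSTstars}: in any feasible tree $T$, the bridge $vw_x$ forces $v_xv_{\bar x}\in E(T)$, and then the triangle on $\{v,v_x,v_{\bar x}\}$ forces exactly one of $vv_x,vv_{\bar x}$ into $E(T)$, which gives the extracted assignment. Since $\ell(e)=u(e)=|\dep(e)|$, whenever the center of the in-star of the $j$-th literal of $c$ lies in $E(T)$ so does $e_c^j$; and by construction this center lies in $E(T)$ precisely when that literal is \emph{false} under the extracted assignment. Hence if some clause had all of its literals false then $e_c^1,e_c^2,e_c^3$ would all be in $E(T)$, contradicting acyclicity. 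Conversely, from a satisfying assignment one builds $T$ by taking the variable edges prescribed by the assignment and, in each clause triangle, the (at most two) edges $e_c^j$ whose literal is false, completing each triangle arbitrarily to connect $a_c$ and $b_c$ to $v$; a routine check shows that $T$ is then a spanning tree each of whose edges has its dependency set inside $E(T)$. The point that needs care --- the main obstacle in adapting the reduction --- is that this gadget must encode $\ell_1\vee\ell_2\vee\ell_3$ rather than the strictly stronger $\ell_1\vee\ell_2$: this is why the extra edge $vz_c$ forcing $a_cb_c$ into the tree is added for two-literal clauses only, and must be omitted for three-literal clauses, where $e_c^3$ genuinely carries the third literal. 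The $2^{o(n+m)}$ lower bound under the \ETH then follows, as in Theorem~\ref{thr:complexityDCSTstars}, since the constructed instance has size linear in $|\phi|$.
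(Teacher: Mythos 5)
Your reduction is correct, but it is a genuinely different construction from the one in the paper. You keep the variable gadget of Theorem~\ref{thr:complexityDCSTstars} (as does the paper), but your clause gadget is a single triangle $\{va_c, vb_c, a_cb_c\}$ per clause, with each triangle edge placed as an in-neighbor (dependency) of the edge encoding the \emph{negation} of its literal, so that a false literal forces its triangle edge into the tree; a clause with all literals false would then force the whole triangle into $T$, contradicting acyclicity, and the pendant edge $vz_c$ (forcing $a_cb_c$) correctly downgrades the gadget to two literals only when needed. The paper instead keeps the clause path $(v_c^{\ell_1},v_c^{\ell_2},v_c^{\ell_3})$ and the cut edges $vv_c^{\ell_i}$ from the out-star reduction and adds, \emph{per literal occurrence}, a triangle $va_c^{\ell_i}b_c^{\ell_i}$ with a pendant $w_c^{\ell_i}$, using a chain of forced choices (some $vv_c^{\ell_i}\in T$ forces $va_c^{\ell_i}\in T$, which excludes $vb_c^{\ell_i}$, which in turn excludes $vv_{\overline{\ell_i}}$ and hence makes $\ell_i$ true); your argument runs by contradiction through a single cycle instead of through this implication chain. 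Both constructions deliver the claimed structure ($D$ a forest of in-stars with $\Delta^-(D)\le 2$ because of the $(3,2,2)$ occurrence bounds, $G$ an outerplanar chordal graph of diameter at most two with $v$ universal, and linear size, giving the \ETH lower bound); yours is smaller (constant-size gadget per clause rather than per literal occurrence) and its correctness proof is shorter, while the paper's per-literal gadget keeps the clause cut $\{vv_c^{\ell_1},vv_c^{\ell_2},vv_c^{\ell_3}\}$ explicit and mirrors the out-star reduction more closely. For completeness you should also state the (routine) membership of the problem in $\NP$, which the theorem's ``$\NP$-complete'' claim requires and the paper dispatches in one sentence.
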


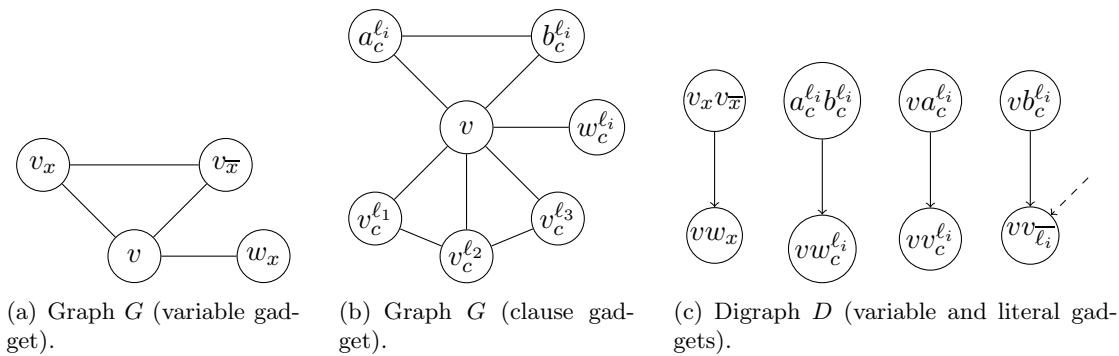
\begin{figure}
  \centering
  \subfigure[Graph $G$ (variable gadget).]{ \label{fig:gadgetVarClauseG}
    \begin{tikzpicture}
      \tikzstyle{vertex}=[draw,shape=circle,minimum size=20pt,inner sep=0.5pt];

      \node[vertex] (v) at (1, 1) {$v$};
      \node[vertex] (vx) [above left = 1cm of v] {$v_x$};
      \node[vertex] (vnx) [above right = 1cm of v] {$v_{\overline{x}}$};
      \node[vertex] (wx) [right = 1cm of v] {$w_x$};


      \draw (v) edge (vx);
      \draw (v) edge (vnx);
      \draw (vx) edge (vnx);
      \draw (v) edge (wx);

    \end{tikzpicture}
  }
  \quad
  \subfigure[Graph $G$ (clause gadget).]{ \label{fig:gadgetLiteralG}
    \begin{tikzpicture}
      \tikzstyle{vertex}=[draw,shape=circle,minimum size=20pt,inner sep=0.5pt];

      \node[vertex] (v) at (1, 1) {$v$};
      \node[vertex] (a) [above left = 1cm of v] {$a_c^{\ell_i}$};
      \node[vertex] (b) [above right = 1cm of v] {$b_c^{\ell_i}$};
      \node[vertex] (wab) [right = 1cm of v] {$w_c^{\ell_i}$};

      \node[vertex] (vc1) [below left = 1cm of v] {$v_c^{\ell_1}$};
      \node[vertex] (vc2) [below = 1cm of v] {$v_c^{\ell_2}$};
      \node[vertex] (vc3) [below right = 1cm of v] {$v_c^{\ell_3}$};

      \draw (v) edge (a);
      \draw (v) edge (b);
      \draw (b) edge (a);
      \draw (v) edge (wab);

      \draw (vc1) edge (vc2);
      \draw (vc3) edge (vc2);
      \draw (v) edge (vc1);
      \draw (v) edge (vc2);
      \draw (v) edge (vc3);
    \end{tikzpicture}
  }
  \quad
  \subfigure[Digraph $D$ (variable and literal gadgets).]{ \label{fig:gadgetVarLiteralD}
    \begin{tikzpicture}
      \tikzstyle{vertex}=[draw,shape=circle,minimum size=20pt,inner sep=0.5pt];

      \node[vertex] (vwx) at (1, 1) {$vw_x$};
      \node[vertex] (vxvnx) [above = 1cm of vwx] {$v_xv_{\overline{x}}$};

      \node[vertex] (ab) [right = 0.5cm of vxvnx] {$a_c^{\ell_i}b_c^{\ell_i}$};
      \node[vertex] (vwab) [below = 1cm of ab] {$vw_c^{\ell_i}$};

      \node[vertex] (va) [right = 0.5cm of ab] {$va_c^{\ell_i}$};
      \node[vertex] (vvc) [below = 1cm of va] {$vv_c^{\ell_i}$};

      \node[vertex] (vb) [right = 0.5cm of va] {$vb_c^{\ell_i}$};
      \node[vertex] (vvl) [below = 1cm of vb] {$vv_{\overline{\ell_i}}$};
      \node (other) [above right = 0.7cm of vvl] {};

      \draw (vxvnx) edge[->] (vwx);
      \draw (ab) edge[->] (vwab);
      \draw (va) edge[->] (vvc);
      \draw (vb) edge[->] (vvl);
      \draw (other) edge[dashed, ->] (vvl);
    \end{tikzpicture}
  }
  \caption{Illustration of the reduction from \texttt{$(3,2,2)$-SAT} used in
    Theorem~\ref{thr:complexityDCSTinstars}.}
  \label{fig:inStar2reduction}
\end{figure}

\begin{proof}
  Notice that a spanning tree $T$ of $G$ can be checked to
  $(l, u)$-satisfy $D$ in polynomial time, thus
  \texttt{G-DCST}$(G, D, \ell, u)$ is in $\NP$.  To prove
  $\NP$-completeness, we again make a reduction from \texttt{$(3, 2, 2)$-SAT}
  to \texttt{G-DCST}.  This way, consider a CNF formula $\phi$ on $n$
  variables and $m$ clauses.  We build an instance $(G, D, \ell, u)$
  of \texttt{G-DCST} as follows (see Figure
  \ref{fig:inStar2reduction}):
  \begin{itemize}
  \item Start by adding a vertex $v$, which will be universal. For
    each variable $x$, add to $G$ vertices $v_x$, $v_{\overline{x}}$,
    and $w_x$, making them adjacent to $v$, and add edge
    $v_xv_{\overline{x}}$; see Figure
    \ref{fig:gadgetVarClauseG}. Selecting edge $vv_x$ will correspond to  the true
    assignment for $x$, and $vv_{\overline{x}}$ to the false one.
  \item For each clause $c = (\ell_1\vee \ell_2\vee \ell_3)$ with three
    literals, add to $G$ vertices
    $\{v_c^{\ell_i},a_c^{\ell_i}, b_c^{\ell_i}, w_c^{\ell_i}\mid i\in
    [3]\}$, and make them adjacent to $v$. Then, add path
    $(v_c^{\ell_1}, v_c^{\ell_2}, v_c^{\ell_3})$, and edges
    $\{a_c^{\ell_i}b_c^{\ell_i}\mid i\in [3]\}$; see Figure
    \ref{fig:gadgetLiteralG}.  Proceed analogously if $c$ has two
    literals.  For each literal $\ell_i$, selecting edge $va_c^{\ell_i}$ will indicate that $c$ is satisfied
    by $\ell_i$, and selecting edge $vb_c^{\ell_i}$ will indicate that $c$ must be
    satisfied by some of its other literals.
  \item For each variable $x$, add arc $(v_xv_{\overline{x}}, vw_x)$
    to $D$. For each clause $c$ and each literal $\ell_i$ of $c$, add
    arcs
    $(a_c^{\ell_i}b_c^{\ell_i}, vw_c^{\ell_i}), (va_c^{\ell_i},
    vv_c^{\ell_i})$, and $(vb_c^{\ell_i}, vv_{\overline{\ell_i}})$ to $D$; see
    Figure \ref{fig:gadgetVarLiteralD}.
  \item Finally, let $\ell(e) = u(e) = |\dep(e)|$ for every
    $e \in E(G)$.
  \end{itemize}

  Observe that $G$ is an outerplanar chordal graph, and that each
  component of $D$ is an in-star. We get $\Delta^-(D) \leq 2$ by the
  constraint in the number of appearances of a literal.

  To show the correctness of the reduction, consider first a
  satisfying assignment of $\phi$.  We build a spanning tree
  $T$ of $G$ as follows. For each variable $x$, choose
  edges $v_xv_{\overline{x}}$ and $vw_x$; then choose edge $vv_x$ if
  $x$ is true, and edge $vv_{\overline{x}}$ otherwise. For each clause
  $c$ with three literals, add edges
  $\{a_c^{\ell_i}b_c^{\ell_i},vw_c^{\ell_i}\mid i \in [3]\}$. Also,
  for each $i\in [3]$, add $\{vv_c^{\ell_i},va_c^{\ell_i}\}$ if
  $\ell_i$ is true; otherwise, add $vb_c^{\ell_i}$. Finally, use path
  $(v_c^{\ell_1}, v_c^{\ell_2},v_c^{\ell_3})$ to connect any possibly
  disconnected vertex. Proceed analogously if $c$ has two literals.
  Denote by $X$ the set of variables of $\phi$, and by $C$ the set of
  clauses; also, write $\ell_i\in c$ to denote the fact that literal
  $\ell_i$ appears in $c$. We first show that $T$ is a spanning tree
  of $G$. It is easy to see that $T$ spans
  $\{v_x,v_{\overline{x}},v,w_x\mid x\in X\}$, and also every
  vertex of degree 1 in $G$. Now, given a clause $c$, because
  each literal $\ell_i$ in $c$ is either true or false, we get that
  either $va_c^{\ell_i}$ or $vb_c^{\ell_i}$ is in $T$, and since
  $a_c^{\ell_i}b_c^{\ell_i}\in E(T)$ we get that $T$ also spans
  $\{a_c^{\ell_i},b_c^{\ell_i}\mid c\in C, \ell_i\in c\}$. Finally,
  for each clause $c$, we know that at least one of its literals is
  true, which means that at least one of the edges linking the path
  $(v_c^{\ell_1}, v_c^{\ell_2},v_c^{\ell_3})$ to $v$ is chosen, and
  since it is always possible to choose edges from this path to
  connect any possible remaining disconnected vertex, we are
  done. Now, we prove that dependencies are satisfied. Dependencies in
  $\{(v_xv_{\overline{x}},vw_x)\mid x\in X\}$, and in
  $\{(a_c^{\ell_i}b_c^{\ell_i},vw_c^{\ell_i})\mid c\in C,\ell_i\in
  c\}$ are all satisfied since all the involved edges are contained in
  $T$. Dependencies in
  $\{(va_c^{\ell_i},vv_c^{\ell_i})\mid c\in C,\ell_i\in c\}$ are also
  valid because we only add these edges together. Finally, given a
  variable $x$, if $x$ is true, then we choose edge $vv_x$, and
  $vb_c^{\ell_i}$ for each clause $c$ such that $\overline{x}$ is the
  $i$-th literal of $c$; and if $x$ is false then we choose edge
  $vv_{\overline{x}}$, and $vb_c^{\ell_i}$ for each clause $c$ such
  that $x$ is the $i$-th literal of $c$. This settles the last type of
  dependencies.

  Conversely, let $T$ be a solution for
  \texttt{G-DCST}$(G, D, \ell, u)$.  For each variable $x$, because
  $vw_x$ is a cut edge and $(v_xv_{\overline{x}},vw_x)\in E(D)$, we
  get that $\{vw_x,v_xv_{\overline{x}}\}\subseteq E(T)$.  Besides, for
  each variable $x$, since $vv_x$ and $vv_{\overline{x}}$ form a cut
  and also form a cycle with $v_xv_{\overline{x}}$, we get that
  exactly one between $vv_x$ and $vv_{\overline{x}}$ is in $T$.  We
  then assign $x$ to true if $vv_x\in E(T)$, and to false otherwise.
  Now, consider a clause $c = (\ell_1 \vee \ell_2 \vee \ell_3$); since
  the edges $\{vv_c^{\ell_1}, vv_c^{\ell_2}, vv_c^{\ell_3}\}$ form a
  cut, at least one of them is in $T$, say $vv_c^{\ell_1}\in
  E(T)$. Hence, because $(va_c^{\ell_1},vv_c^{\ell_1})\in A(D)$, we
  get that $va_c^{\ell_1} \in E(T)$. But then, since
  $a_c^{\ell_1}b_c^{\ell_1}\in E(T)$, we have that
  $vb_c^{\ell_1} \notin E(T)$, which in turn implies that
  $vv_{\overline{\ell}_1} \notin E(T)$ because of the dependency
  $(vb_c^{\ell_1},vv_{\overline{\ell}_1})$. We then conclude that
  $vv_{\ell_1}$ must be chosen, henceforth $\ell_1$ is a true literal
  in $c$.  The case where $c$ has only two literals is analogous.

  Finally, since the constructed instance has size linear in the size
  of $\phi$, we obtain the claimed lower bound $2^{o(n+m)}$ under the \ETH.
\end{proof}

Let us observe that, similarly to Corollary~\ref{cor:LA-DCST}, the results in Theorem~\ref{thr:complexityDCSTinstars} can be stated to \texttt{A-DCST}$(G, D)$, since this problem is equivalent to \texttt{G-DCST}$(G, D, \dep,\dep)$. This also extends the achievements from~\cite{vianaCampelo2019}.

\subsection{Constraints on functions $\ell$ and $u$}
\label{sec:constraints-functions}

In this subsection, we examine the complexity of
\texttt{G-DCST}$(G, D, \ell, u)$ by focusing on functions $\ell$
and $u$. 
Recall that, given constants $c,c'$,
\texttt{G-DCST}$(G, D, c, c')$ denotes the problem restricted to
instances where $\ell(e) = c$ and $u(e)=c'$ for every $e\in
E(G)$. Given a function $f:E\rightarrow \mathbb{N}$ and a positive
integer $c$, we denote by $f+c$ the function obtained from $f$ by
adding $c$ to $f(e)$ for every $e\in E$. The following lemma will be
useful.

\begin{lemma}\label{lem:increasing_ell_u}
  Let $c,\ell, u$ be positive integers with $u\ge \ell$.
  Then, instances $(G,D,\ell,u)$ and $(G',D',\ell+c,u+c)$ of \texttt{G-DCST} are equivalent.
\end{lemma}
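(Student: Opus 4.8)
The lemma claims that the instances $(G,D,\ell,u)$ and $(G',D',\ell+c,u+c)$ of \texttt{G-DCST} are equivalent. The key issue is that when we increase every lower and upper bound by $c$, a feasible tree in the new instance must have, for each edge $e$, exactly $c$ more dependencies selected than required in the old instance. So the strategy is to build $G'$ from $G$ by attaching, for each edge $e$ of $G$, a small gadget of $c$ "always-present" edges that are dependencies of $e$ (in $D'$) but whose selection is forced in every spanning tree. Then an edge $e$ has its dependency count shifted by exactly $c$ relative to $G$, so $(\ell,u)$-satisfaction in $G$ translates precisely to $(\ell+c,u+c)$-satisfaction in $G'$.

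Concretely, the plan is: for each edge $e = xy \in E(G)$, introduce $c$ new vertices $z_e^1,\dots,z_e^c$, each a leaf adjacent only to $x$ (say), giving pendant edges $f_e^1,\dots,f_e^c$; since these are cut edges (bridges to degree-one vertices), every spanning tree of $G'$ must contain all of them. Set $G' = (V(G) \cup \{z_e^j\}, E(G) \cup \{f_e^j\})$. Build $D'$ from $D$ by keeping all arcs of $D$ and adding, for each $e$, the $c$ arcs $(f_e^j, e)$ for $j \in [c]$; the new vertices $f_e^j$ of $D'$ receive no incoming arcs, so $\dep_{D'}(f_e^j) = \emptyset$ and we set $\ell(f_e^j)=u(f_e^j)=0$ (consistent with the footnote convention). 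Then $\dep_{D'}(e) = \dep_D(e) \cup \{f_e^1,\dots,f_e^c\}$ for $e \in E(G)$. Define $\ell' = \ell + c$ and $u' = u + c$ on $E(G)$, and $\ell'=u'=0$ on the new pendant edges. The correctness argument is a bijection between spanning trees: given a spanning tree $T$ of $G$, let $T' = T \cup \{f_e^j : e \in E(G), j \in [c]\}$; conversely any spanning tree $T'$ of $G'$ must contain every $f_e^j$ (bridges), and removing them yields a spanning tree $T$ of $G$. For the count: in $T'$, $|\dep_{D'}(e) \cap E(T')| = |\dep_D(e) \cap E(T)| + c$ for every $e \in E(G)$, so $\ell(e) \le |\dep_D(e)\cap E(T)| \le u(e)$ iff $\ell(e)+c \le |\dep_{D'}(e)\cap E(T')| \le u(e)+c$; and the new pendant edges trivially satisfy their (zero) bounds. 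Hence $T$ is a solution for $(G,D,\ell,u)$ iff $T'$ is a solution for $(G',D',\ell+c,u+c)$, and the construction is clearly polynomial, giving equivalence of the instances.

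The main obstacle — really a bookkeeping subtlety rather than a genuine difficulty — is making sure the added edges are genuinely forced into every spanning tree and that no unintended cycles or degree changes interfere with the dependency counts of the original edges; attaching degree-one vertices handles this cleanly, since pendant edges are bridges. One should also double-check the boundary convention $0 \le \ell(e) \le u(e) \le |\dep(e)|$ from the footnote: after the shift, $|\dep_{D'}(e)| = |\dep_D(e)| + c \ge u(e) + c = u'(e)$, so the convention is preserved, and on the pendant edges the bounds are $0$ as required. I would also remark that the same gadget works verbatim for the weighted problem \texttt{G-DCMST} by giving the new pendant edges weight $0$ (or any fixed constant, adjusting the target accordingly), so the optimization versions are equivalent as well, though the lemma as stated only concerns the feasibility problem.
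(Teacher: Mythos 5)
Your core mechanism is the same as the paper's: force some pendant (bridge) edges into every spanning tree, make $c$ of them dependencies of each original edge so that every dependency count is shifted by exactly $c$, and transfer spanning trees back and forth by adding/removing the forced edges. That part of your argument (bridges are in every spanning tree, the count identity $|\dep_{D'}(e)\cap E(T')| = |\dep_D(e)\cap E(T)| + c$, and the resulting equivalence of the two specific instances) is correct.

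The issue is with the bounds you place on the new edges. You set $\ell'=u'=0$ on the pendant edges $f_e^j$, so the instance you produce is \emph{not} of the form $(G',D',\ell+c,u+c)$ under the paper's definition of constant-bound instances (``$\ell(e)=c$ and $u(e)=c'$ for \emph{every} $e\in E(G)$''), and this matters because the only role of Lemma~\ref{lem:increasing_ell_u} is to feed Corollary~\ref{cor:hard-constant}, i.e.\ hardness \emph{within} the class of instances whose bounds are the constants $\ell+c$ and $u+c$ everywhere. Note that you cannot simply assign the constants to your pendant edges: they have empty dependency sets, are bridges, hence forced into every spanning tree, and would then violate the lower bound $\ell+c\ge 1$, making every instance infeasible. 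Your construction can be defended by invoking the paper's footnote convention that edges with empty dependency set automatically get $\ell(e)=u(e)=0$, but the paper avoids relying on this: it attaches $\ell+c+1$ pendant edges at a single vertex and puts a \emph{symmetric clique} on them in $D'$, so each new edge has exactly $\ell+c$ dependencies, all forced, and the uniform constants $(\ell+c,u+c)$ are genuinely satisfied on every edge of $G'$ (only the first $c$ of these new edges are made dependencies of the original edges). Your proof is repaired either by stating explicitly that you use the empty-dependency convention, or by adopting the clique gadget so that the new edges carry the same constant bounds as everything else.
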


\begin{proof}
  Choose any $v\in V(G)$, and let $G'$ be obtained from $G$ by adding
  $\ell+c+1$ vertices of degree one pending in $v$; denote the
  new edges by $e_1,\ldots,e_{\ell+c+1}$. Now, add the symmetric
  clique on vertices $\{e_1,\ldots, e_{\ell+c+1}\}$ to $D$, and add
  $e_ie$ to $D$, for every $i\in [c]$ and every $e\in E(G)$. Let $G'$
  and $D'$ be the obtained graph and digraph, respectively. Finally,
  let $\ell'(e) = \ell+c$ and $u'(e)=u+c$, for every $e\in E(G')$.  We
  prove that $(G,D,\ell,u)$ is a \yes-instance of \texttt{G-DCST}
  if and only if $(G',D',\ell',u')$ is a \yes-instance of
  \texttt{G-DCST}.

  First, let $T$ be a spanning tree of $G$ that $(\ell,u)$-satisfies
  $D$. Let $T'$ be obtained from $T$ by adding
  $e_1,\ldots,e_{\ell+c+1}$. Clearly $T'$ is a spanning tree of $G'$;
  we prove that $T'$ $(\ell',u')$-satisfies $D'$. Let $e\in E(G')$. If
  $e\in E(G)$, because at least $\ell$ and at most $u$ dependencies of
  $e$ are in $T$, and because
  $E(T')\cap \dep_{D'}(e) = (E(T)\cap \dep_D(e))\cup
  \{e_1,\ldots,e_{c}\}$, we get that at least $\ell+c$ and at most
  $u+c$ dependencies of $e$ are in $T'$. And if
  $e\in \{e_1,\ldots,e_{\ell+c+1}\}$, we get that
  $E(T')\cap \dep_{D'}(e) = \{e_1,\ldots,e_{\ell+c+1}\}\setminus \{e\}
  $ and again the constraints hold.

  On the other hand, let $T'$ be a spanning tree of $G'$; we know that
  $\{e_1,\ldots,e_{\ell+ c+1}\}\subseteq E(T')$. Let $T$ be obtained
  from $T'$ by removing these edges, and consider $e\in E(G)$. It follows
  that
  $|\dep_D(e)\cap E(T)| = |(\dep_{D'}(e)\cap E(T'))\setminus
  \{e_1,\ldots,e_{c}\}| = |\dep_{D'}(e)\cap E(T')|-c$, and since
  $\ell+c\le |\dep_{D'}(e)\cap E(T')|\le u+c$ we get that $T$
  $(\ell,u)$-satisfies $D$.
\end{proof}

First, we analyze the cases where $\ell=0$. Recall that in
\texttt{CCST}, whenever an edge $e$ is chosen, no dependencies of $e$
can be chosen; this translates to having $\ell(e) = u(e) = 0$ for
every $e\in E(G)$. So in a sense, when one considers instances
$(G, D, \ell, u)$ of \texttt{G-DCST} where $\ell(e) = 0$ for each
$e \in E(G)$, one can think of the problem as a ``weak" version of
\texttt{CCST}. It thus makes sense to ask whether this version turns
out to be polynomial.  Indeed, we notice that
\texttt{G-DCST}$(G, D, \ell, u)$ with $\ell = 0$ and
$u(e) \geq |\dep_D(e)|$, for each $e \in E(G)$, is an easily solvable
problem since every spanning tree of $G$ trivially
$(\ell, u)$-satisfies $D$. In the following theorem, we see that this
is not the case when $u$ is a constant function.

\begin{theorem}\label{theo:0c}
  Let $c$ be a positive integer. Then \texttt{G-DCST}$(G, D, 0, c)$ is
  $\NP$-complete.
\end{theorem}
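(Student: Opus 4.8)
The plan is a short padding reduction. Membership in $\NP$ is immediate: given a spanning tree $T$ of $G$, one checks in polynomial time that $|\dep(e)\cap E(T)|\le c$ for every $e\in E(T)$ (the lower bound $0$ being vacuous). For hardness, I reduce from $\texttt{G-DCST}(G,D,0,0)$, which is already known to be $\NP$-complete: it corresponds to the conflict constrained spanning tree problem \texttt{CCST} up to taking an arbitrary orientation of the conflict graph, shown $\NP$-complete in~\cite{darmann2011paths} and recalled in the introduction.

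Given an instance $(G,D,0,0)$ of \texttt{G-DCST}, fix an arbitrary vertex $v\in V(G)$ and let $G'$ be obtained from $G$ by attaching $c$ new pendant vertices to $v$; denote by $f_1,\dots,f_c$ the $c$ corresponding new (bridge) edges. Let $D'$ be obtained from $D$ by adding, for every $e\in E(G)$ with $\dep_D(e)\neq\emptyset$, the arcs $(f_1,e),\dots,(f_c,e)$, while no arc enters any $f_i$. The produced instance is $(G',D',0,c)$, which is clearly computable in polynomial time; note that in it every edge with a nonempty dependency set has dependency set exactly $\dep_D(e)\cup\{f_1,\dots,f_c\}$.

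Correctness rests on two observations: each $f_i$ is a cut edge of $G'$, hence lies in every spanning tree of $G'$; and the map $T\mapsto T\cup\{f_1,\dots,f_c\}$ is a bijection between the spanning trees of $G$ and those of $G'$, with inverse $T'\mapsto T'\cap E(G)$. Fix such a corresponding pair $(T,T')$. For $e=f_i$ the constraint at $e$ is vacuous since $\dep_{D'}(f_i)=\emptyset$; for $e\in E(T)$ with $\dep_D(e)=\emptyset$ it is also vacuous; and for $e\in E(T)$ with $\dep_D(e)\neq\emptyset$ we have $|\dep_{D'}(e)\cap E(T')|=|\dep_D(e)\cap E(T)|+c$, so the upper bound $c$ holds at $e$ if and only if $\dep_D(e)\cap E(T)=\emptyset$. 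Hence $T'$ $(0,c)$-satisfies $D'$ exactly when $T$ $(0,0)$-satisfies $D$, which completes the reduction.

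There is essentially no hard part here; the only point that needs a little care is the convention that $\ell(e)=u(e)=0$ whenever $\dep(e)=\emptyset$, which is precisely why the dummy arcs are added only to edges that already had a nonempty dependency set — this prevents the padding from imposing new, unintended constraints on previously unconstrained edges, and keeps the equivalence exact.
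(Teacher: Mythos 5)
Your reduction is correct and uses essentially the same idea as the paper's proof: starting from \texttt{CCST} (equivalently \texttt{G-DCST}$(G,D',0,0)$ for an arbitrary orientation $D'$), pad the instance with pendant cut edges that are forced into every spanning tree and whose presence saturates the upper bound $c$, so that the $(0,c)$ constraint on the original edges collapses to the $(0,0)$ conflict constraint. The only difference is cosmetic: you attach $c$ shared pendant edges at one vertex and point them at every edge with nonempty dependency set, whereas the paper attaches, through a hub vertex $p$, $c$ private pendant edges $pp^1_e,\dots,pp^c_e$ per edge $e$ of $G$; both variants are valid.
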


\begin{proof}
  Recall that \texttt{CCST}$(G,D)$ is $\NP$-complete \cite{darmann2011paths} and equivalent to
  \texttt{G-DCST}$(G,D',0,0)$ when $D'$ is an arbitrary orientation of $D$.
  Given an instance $(G = (V, E), D)$ of \texttt{CCST}, where $D$ is an undirected graph with $V(D)=E$,
  we construct and equivalent instance $(G', D', 0, c)$ of
  \texttt{G-DCST} as follows (cf.~Figure~\ref{fig:ccmst})
  \begin{itemize}
  \item Let $G'$ be obtained from $G$ by adding a new vertex $p$ and,
    for each $i\in [c]$ and each edge $e\in E(G)$, adding a new vertex
    $p^i_e$. Then, make $p$ adjacent to every $p^i_e$, and to an
    arbitrary vertex $q\in V(G)$. More formally,
    $G' = (V \cup V', E \cup E')$, where
    \[V' = \{p\} \cup \{p_e^i : e \in E, i \in [c]\}\text{ and}\]
    \[E' = \{pq\} \cup \{pp_e^i : e \in E, i \in [c]\}.\]
	
  \item Let $D'$ be obtained from an arbitrary orientation of $D$ by adding an arc $(pp^i_e,e)$ for every $e\in E(G)$ and every $i\in [c]$.
  \end{itemize}
  The constructed instance has size clearly polynomial on the size of
  $(G,D)$ (recall that $c$ is a constant). Note that
  $G'[V'\cup \{q\}]$ is a star.  Now, we show that
  $(G, D)$ is a \yes-instance of \texttt{CCST} if and only if
  $(G', D', 0,c)$ is a \yes-instance of \texttt{G-DCST}. 
	
  First, let $T$ be a solution for \texttt{CCST}$(G, D)$, and let $T'$
  be obtained from $T$ by adding $E'$.  Clearly $T'$ is a spanning
  tree of $G'$; hence it remains to show that $T'$ $(0,c)$-satisfies
  $D'$. For this, consider $e\in E(T')$.  If $e\in E(T)$, then because
  $T$ $(0,0)$-satisfies $D$, we have that
  $E(T')\cap \dep_{D'}(e) = \{pp^1_e,\ldots,pp^c_e\}$ and therefore
  $0\le |E(T')\cap \dep_{D'}(e)|=c$. And if $e\in E'$, we have that
  $\dep_{D'}(e) = \emptyset$ and trivially
  $0 =|\dep_{D'}(e) \cap E(T')| \leq c$.
	
  Conversely, let $T'$ be a spanning tree of $G'$ that
  $(0,c)$-satisfies $D'$, and let $T = T'[V]$.  Because $p$ separates
  $V' \setminus \{p\}$ from $V$,
  we know that $T$ is connected and, therefore, it is a spanning tree
  of $G$; so it remains to show that $T$ $(0,0)$-satisfies $D$. For
  this consider $e\in E(T)$.  Since each edge in $E'$ is a cut edge in
  $G'$, we get that $E' \subseteq E(T')$. Therefore, since
  $|E(T')\cap \dep_{D'}(e)|\le c$ and
  $\{pp^1_c,\ldots,pp^c_e\}\subseteq E(T')\cap \dep_{D'}(e)$, we get
  that $E(T')\cap (E(G)\setminus E') = \emptyset$, i.e.,
  $|E(T)\cap \dep_D(e)| = 0$, as we wanted to show.
\end{proof}

\begin{figure}
  \centering
  \subfigure[$G'$.]{
    \begin{tikzpicture}
      \tikzstyle{vertex}=[draw,shape=circle,minimum size=20pt,inner sep=0.5pt];
		
      \node[vertex] (q) at (1,1) {$q$};
      \node[vertex] (v) [right = 1cm of q] {$v$};
      \node[vertex] (u) [right = 1cm of v] {$u$};
      \node[vertex] (p) [below = 1cm of q] {$p$};
      \node[vertex] (pe11) [below left = 1.5cm of p] {$p_{e_1}^1$};
      \node[vertex] (pe12) [right = 0.3cm of pe11] {$p_{e_1}^c$};
      \node[vertex] (pe22) [below right = 1.5cm of p] {$p_{e_2}^c$};
      \node[vertex] (pe21) [left = 0.3cm of pe22] {$p_{e_2}^1$};

      \draw (pe11) edge[dotted] (pe12);
      \draw (pe21) edge[dotted] (pe22);

      \draw[dotted] (2.5,1) ellipse (3cm and 0.7cm);
      \node (G) at (0,1) {$G$};

      \draw (u) edge node[below] {$e_1$} (v);
      \draw (v) edge node[below] {$e_2$} (q);
      \draw (p) edge (q);
      \draw (p) edge (pe11);
      \draw (p) edge (pe12);
      \draw (p) edge (pe21);
      \draw (p) edge (pe22);
    \end{tikzpicture}}
  \subfigure[$G_c$.]{
    \begin{tikzpicture}
      \tikzstyle{vertex}=[draw,shape=circle,minimum size=20pt,inner sep=0.5pt];

      \node[vertex] (e1) at (1,1) {$e_1$};
      \node[vertex] (e2) [right = 1cm of e1] {$e_2$};

      \draw (e1) edge (e2);
    \end{tikzpicture}}
  \subfigure[$D$.]{
    \begin{tikzpicture}
      \tikzstyle{vertex}=[draw,shape=circle,minimum size=20pt,inner sep=0.5pt];

      \node[vertex] (e1) at (1,1) {$e_1$};
      \node[vertex] (e2) [right = 1cm of e1] {$e_2$};
      \node[vertex] (ae1) [above left = 1cm of e1] {$pp_{e_1}^1$};
      \node[vertex] (ae2) [above right = 1cm of e1] {$pp_{e_1}^c$};
      \node[vertex] (af1) [below left = 1cm of e2] {$pp_{e_2}^1$};
      \node[vertex] (af2) [below right = 1cm of e2] {$pp_{e_2}^c$};
      \draw[dotted] (2.2,-0.2) -- (3.2,-0.2);
      \draw[dotted] (0.5, 2.2) -- (1.5, 2.2);

      \draw (e1) edge[bend left, ->] (e2);
      \draw (e2) edge[bend left, ->] (e1);
      \draw (ae1) edge[->] (e1);
      \draw (ae2) edge[->] (e1);
      \draw (af1) edge[->] (e2);
      \draw (af2) edge[->] (e2);
    \end{tikzpicture}}
  \caption{Illustration of the reduction for \texttt{CCMST}.}
  \label{fig:ccmst}
\end{figure}
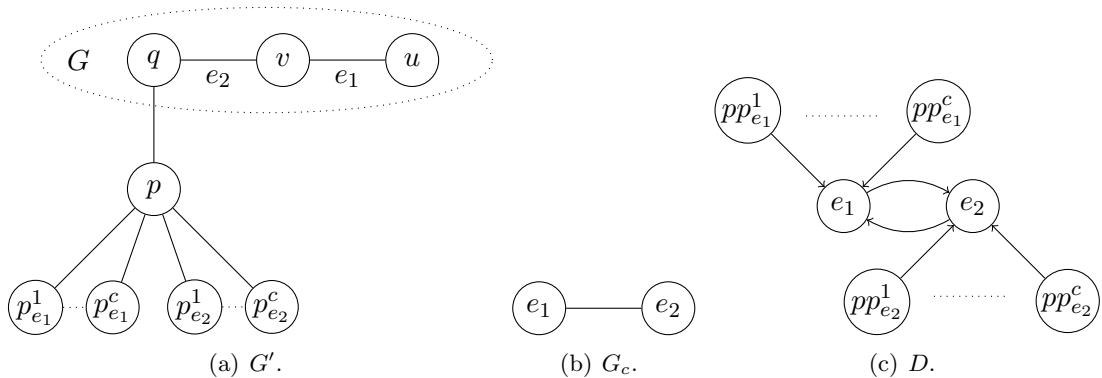

Combining Lemma~\ref{lem:increasing_ell_u} and Theorem~\ref{theo:0c}, we get result~\ref{NP2}, that is, 
\texttt{G-DCST}$(G, D, \ell, u)$ is $\NP$-complete for every
combination of constant values $\ell$ and $u$.

\begin{corollary}\label{cor:hard-constant}
  For every pair of positive integers $\ell,u$ with $\ell\le u$, we
  have that \texttt{G-DCST}$(G, D, \ell, u)$ is $\NP$-complete.
\end{corollary}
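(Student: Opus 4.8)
The plan is to derive the statement by \emph{bootstrapping} from Theorem~\ref{theo:0c} together with the known $\NP$-completeness of \texttt{CCST}, using Lemma~\ref{lem:increasing_ell_u} to shift the pair of constant bounds. As a preliminary step I would record that \texttt{G-DCST}$(G,D,0,c)$ is $\NP$-complete for \emph{every} non-negative integer $c$: for $c\ge 1$ this is exactly Theorem~\ref{theo:0c}, and for $c=0$ it coincides with \texttt{CCST}, which is $\NP$-complete~\cite{darmann2011paths} and equivalent to \texttt{G-DCST}$(G,D',0,0)$ for an arbitrary orientation $D'$ of $D$. Membership in $\NP$ is immediate in all cases, since one can verify in polynomial time that a given spanning tree $(\ell,u)$-satisfies $D$.

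Now fix positive integers $\ell\le u$ and set $c:=u-\ell\ge 0$. Starting from an instance $(G,D,0,c)$ of \texttt{G-DCST}, I would apply the construction of Lemma~\ref{lem:increasing_ell_u} with increment $\ell$ (i.e., adding $\ell$ to both bounds): this produces in polynomial time an equivalent instance $(G',D',\ell,c+\ell)=(G',D',\ell,u)$. Since the source problem is $\NP$-complete by the preliminary step, so is \texttt{G-DCST}$(G,D,\ell,u)$, which together with $\NP$-membership gives the desired conclusion.

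The only point that needs a little care — and essentially the sole ``obstacle'' — is the handling of the base bounds to which the shift is applied. When $\ell<u$, the shift of Lemma~\ref{lem:increasing_ell_u} starts from the pair $(0,c)$ with $c\ge 1$, which is covered by Theorem~\ref{theo:0c}. When $\ell=u$, it starts from $(0,0)$, i.e., from a \texttt{CCST} instance, so one should observe that the construction in the proof of Lemma~\ref{lem:increasing_ell_u} carries over verbatim when the starting lower bound is $0$ (it does: the gadget consists of $\ell+c+1$ pendant edges forming a symmetric clique in $D$, and with starting bounds $(0,0)$ this simply creates $c+1$ pendant edges, each a cut edge with exactly $c$ dependencies, hence forced into every spanning tree). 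Alternatively, the case $\ell=u$ can be argued directly: applying that same gadget with increment $c=\ell$ to a \texttt{CCST} instance yields an equivalent instance of \texttt{G-DCST}$(\cdot,\cdot,\ell,\ell)$. Either way, no new hard argument is needed; the proof is a routine combination of the two cited results.
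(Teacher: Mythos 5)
Your proposal is correct and follows essentially the same route as the paper: take $c=u-\ell$, invoke Theorem~\ref{theo:0c} for the base pair $(0,c)$, and shift both bounds by $\ell$ via the construction of Lemma~\ref{lem:increasing_ell_u}. In fact you are slightly more careful than the paper's own two-line proof, since you explicitly handle the case $\ell=u$ (i.e., $c=0$), where the base instance is \texttt{CCST} $\equiv$ \texttt{G-DCST}$(G,D',0,0)$ rather than an instance covered by the literal statement of Theorem~\ref{theo:0c}, and you note that the lemma's gadget works verbatim when the starting lower bound is~$0$.
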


\begin{proof}
  Let $c = u-\ell$. By Theorem~\ref{theo:0c}, we have that
  \texttt{G-DCST}$(G, D, 0, c)$ is $\NP$-complete, and by
  Lemma~\ref{lem:increasing_ell_u}, we have that
  \texttt{G-DCST}$(G, D, \ell, c+\ell = u)$ also is.
\end{proof}

As we have already mentioned, if $u(e) = |\dep(e)|$ for every
$e\in E(G)$, then \texttt{G-DCST}$(G, D, 0, u)$ is easy since any
spanning tree $(0,u)$-satisfies $D$. Hence, it is natural to ask whether the
problem continues to be easy when $u(e)$ is just slightly smaller than
$|\dep(e)|$. The following corollary is trivially obtained from previous results. It answers the aforementioned question negatively and yields result~\ref{NP3}.
Given a positive integer $c$,
we denote by $\dep-c$ the function $u:E(G)\rightarrow \mathbb{N}$
defined by $u(e) = \max\{|\dep(e)|-c,0\}$.

\begin{corollary}\label{cor:dep-1}
  \texttt{G-DCST}$(G, D, 0, \dep-1)$ is $\NP$-complete, even when $D$
  is a collection of directed paths of length at most two.
\end{corollary}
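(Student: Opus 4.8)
The plan is to reduce from \texttt{CCST} restricted to instances whose conflict graph is a forest of paths of length at most two, which is $\NP$-complete by~\cite{darmann2011paths}, and to exploit the fact that on a digraph of maximum in-degree one the function $\dep-1$ is identically zero.

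Given an instance $(G,D)$ of \texttt{CCST} with $D$ a forest of paths of length at most two, I would keep $G$ unchanged and let $D'$ be the orientation of $D$ obtained by directing each length-two component as a directed path $a \to b \to c$, each length-one component (a single edge) arbitrarily, and leaving isolated vertices untouched. Then $D'$ is a collection of directed paths of length at most two, exactly as demanded by the statement, and I take as target instance $(G, D', 0, \dep-1)$ of \texttt{G-DCST}, which is produced in polynomial time.

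The correctness rests on two elementary observations that I would isolate as short claims. First, every vertex of $D'$ has in-degree at most one, so $|\dep_{D'}(e)| \leq 1$ for every $e \in E(G)$, whence $(\dep-1)(e) = \max\{|\dep_{D'}(e)|-1,\,0\} = 0$; thus the instance $(G, D', 0, \dep-1)$ is literally the instance $(G, D', 0, 0)$. Second, for \emph{any} orientation $D'$ of an undirected conflict graph $D$, the instance $(G, D', 0, 0)$ of \texttt{G-DCST} has the same feasible spanning trees as $(G, D)$ seen as an instance of \texttt{CCST}: each conflict edge $ee'$ of $D$ produces exactly one arc of $D'$, say $(e', e)$, and the constraint ``$e \in E(T) \Rightarrow |E(T) \cap \dep_{D'}(e)| \leq 0$'' then forces $e' \notin E(T)$ whenever $e \in E(T)$, which is precisely the conflict constraint that $e$ and $e'$ are not both in $E(T)$; since no other arc is attached to the pair $\{e,e'\}$, the feasible sets coincide. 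Membership in $\NP$ is immediate, as verifying that a given spanning tree $(\ell,u)$-satisfies $D'$ is polynomial.

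Chaining the two observations gives that $(G, D', 0, \dep-1)$ is a \yes-instance of \texttt{G-DCST} if and only if $(G,D)$ is a \yes-instance of \texttt{CCST}, establishing $\NP$-completeness with $D'$ a collection of directed paths of length at most two. I do not expect a genuine obstacle; the only step needing a careful (though routine) argument is the second claim — that reorienting the conflict graph leaves the feasible set unchanged — and it works precisely because each conflict edge contributes a single arc, so the one-sided $(0,0)$-constraint still captures a symmetric conflict. (Reducing instead from Theorem~\ref{thr:complexityGDCSTpaths} would be less direct, since there one has $\ell=u=\dep$ rather than the zero lower bound needed here.)
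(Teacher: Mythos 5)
Your proposal is correct and follows essentially the same route as the paper's proof: both rest on the observations that a collection of directed paths of length at most two has $\Delta^-(D)\le 1$, so $\dep-1$ is identically zero, and that $(0,0)$-satisfying an orientation of a conflict graph is equivalent to the original \texttt{CCST} constraint, whose $\NP$-completeness on forests of paths of length at most two~\cite{darmann2011paths} then carries over. The only cosmetic difference is that you phrase it as an explicit forward reduction from \texttt{CCST} while the paper phrases it as an equivalence with \texttt{CCST} on the underlying undirected graph.
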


\begin{proof}
  Assume that $D$ is a collection of directed paths of length at
  most two.  Then, $\Delta^-(D)=1$ and therefore $\dep-1$ is equal to
  the constant function zero. It means that we are considering
  \texttt{G-DCST}$(G, D, 0, 0)$. This problem is equivalent to
  \texttt{CCST}$(G,D')$, where $D'$ is the (undirected) underlying
  graph of $D$. Then $D'$ is a collection of paths of length at most
  two, and it is known that \texttt{CCST}$(G,D')$ is
  $\NP$-complete~\cite{darmann2011paths}.
\end{proof}

\section{Positive results}\label{sec:poly}

In this section, we present some polynomial cases of the
\texttt{G-DCST} problem. Whenever possible, we deal with the
optimization version instead, i.e., we consider that we are also given
a weight function $w$ and that the objective is to find a spanning
tree that $(\ell,u)$-satisfies $D$ having minimum weighted
sum-weight. This is denoted by \texttt{G-DCMST}$(G,D,\ell,u,w)$.

Similarly to the previous section, we first focus in Section~\ref{sec:algo-D} on constraints on the dependency graph $D$, and then in
Section~\ref{sec:poly_ell_u} on constraints on the functions $\ell$ and $u$. Finally, we present in Section~\ref{sec:exponential_algorithm} a simple exponential-time algorithm to solve the problem.

\subsection{Constraints on $D$}
\label{sec:algo-D}

We start by investigating the case where $D$ is a collection of
directed paths of length at most one (note that there might be some edges of $G$ that are isolated vertices in $D$).  So, given a graph $G = (V, E)$ and a digraph $D = (E, A)$,
write $E$ as $\{e_1, \dots, e_m\}$ and assume that
$A = \{(e_i, e_{i + t}) : i \in \{1, \dots, t\} \}$, for some $t \leq \lfloor \frac{m}{2} \rfloor$.
Let $S = \{e_i : i \in \{t + 1, \dots, 2t\}\}$, i.e., $S$ is the set
of the edges $e\in E(G)$ with $\dep(e)\neq \emptyset$. The idea is to
find a subset of edges $S' \subseteq S$ that connects the components
of $G-S$ and such that the dependencies of $S'$ do not form a cycle in
$G$.  Let $\mathcal{C} = \{C_1, \dots, C_k\}$ be the set of connected
components of $G-S$, and let $H$ be the multigraph obtained from $G$
by contracting each $C_i$ to a single vertex (loops are
removed). Observe that there is an injection from the
multiedges of $H$ to the edges in $S$.  Hence, given $S'\subseteq S$,
we can pick $H(S')$ to be the subgraph of $H$ induced by $S'$, i.e.,
$H(S')$ has ${\cal C}$ as vertex set, and there is an edge between
$C_i,C_j$, $i\neq j$, for each edge $e\in S'$ with one endpoint in
$C_i$ and the other one in $C_j$. Observe that some of the edges of
$S$ might appear inside a component. However, as we will see, these
edges can actually be ignored, and this is why we do not need to add
the loops in $H$.

\begin{lemma} \label{lem:matchingIdea}
  Let $G = (V, E)$ be a graph, $D = (E, A)$ be a digraph, 
  before, 
  and $\ell,u$ be such that
  $\ell(e) = u(e) = |\dep(e)|\in \{0,1\}$ for every $e\in E(G)$.  If
  there exists $S' \subseteq S$ such that $H(S')$ is a spanning tree
  of $H$ and $(V, \dep(S'))$ is acyclic, then
  \texttt{G-DCST}$(G, D,\ell, u)$ is feasible.  Conversely, any
  feasible solution of \texttt{G-DCST}$(G, D,\ell,u)$ contains such a  subset $S'$.
\end{lemma}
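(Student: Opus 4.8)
The plan is to establish the two directions of the biconditional by translating spanning-tree-of-$G$ constraints into spanning-tree-of-$H$ constraints, and by carefully tracking which edges of $S$ do the "connecting" work. Recall that $G-S$ has components $\mathcal{C}=\{C_1,\dots,C_k\}$, each $C_i$ is connected, and $H$ is the contraction of each $C_i$ to a point. The crucial structural observation to record first is this: if $T$ is any spanning tree of $G$, then its edges inside the components, $E(T)\cap E(G[C_i])$, together with a subset of $S$, form $T$; and contracting each $C_i$ in $T$ yields a connected spanning subgraph of $H$, hence $H(E(T)\cap S)$ is connected (it spans $\mathcal{C}$). Conversely, given any $S'\subseteq S$ with $H(S')$ a spanning tree of $H$, we can reconstruct a spanning tree of $G$ by taking $S'$ together with, for each $i$, an arbitrary spanning tree of $C_i$; this is a spanning tree of $G$ because $H(S')$ being a tree means the "between-component" edges $S'$ connect the $C_i$'s acyclically at the contracted level, and the within-component spanning trees add no extra cycles.

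For the forward direction, suppose $S'\subseteq S$ satisfies the hypotheses: $H(S')$ is a spanning tree of $H$ and $(V,\dep(S'))$ is acyclic. I would first build a candidate spanning tree $T_0 = S' \cup \bigcup_i T_{C_i}$ where $T_{C_i}$ is a spanning tree of $C_i$; by the observation above $T_0$ is a spanning tree of $G$. The issue is that $T_0$ need not $(\ell,u)$-satisfy $D$: an edge $e\in E(T_0)$ with $|\dep(e)|=1$ needs its unique dependency $\dep(e)$ to lie in $E(T_0)$ as well (since $\ell(e)=u(e)=1$), and edges with $\dep(e)=\emptyset$ are automatically fine. The hypothesis gives us that $\dep(S')\subseteq$ some acyclic edge set, so I want to enlarge $T_0$ to include $\dep(S')$. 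Here I use that $\dep(S')$ is acyclic in $G$: I claim we can pick the within-component spanning trees $T_{C_i}$ to contain $\dep(S')\cap E(G[C_i])$ — this is possible exactly because $\dep(S')$ restricted to each $C_i$ is a forest, so it extends to a spanning tree of $C_i$. (Edges of $\dep(S')$ that are between two components are themselves in $S$, and I must argue they can be absorbed: since they have both endpoints in distinct $C_i,C_j$, and $H(S')$ already spans, adding such an edge would create a cycle in $H$; so instead these between-component dependency edges must themselves already belong to $S'$ — which requires checking that a dependency of an edge in $S'$ that crosses components is forced into $S'$. If not, the construction would need adjustment, e.g. swapping edges; this is the delicate point.) Once $T$ contains $S'\cup\dep(S')$ and is still a spanning tree (acyclicity of $\dep(S')$ is what guarantees we stay acyclic), every chosen edge $e$ with $\dep(e)\neq\emptyset$ that lies in $S'$ has its dependency present; edges of $T$ not in $S$ have $\dep(e)=\emptyset$; and edges of $S\setminus S'$ inside a component that we may have been forced to include also have their dependencies handled the same way — so $T$ $(\ell,u)$-satisfies $D$.

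For the converse, let $T$ be a feasible solution of \texttt{G-DCST}$(G,D,\ell,u)$. Set $S' = E(T)\cap S$. By the structural observation, $H(S')$ is a connected spanning subgraph of $H$; and since $T$ is acyclic, its contraction $H(S')$ is acyclic too (contracting connected subgraphs of a forest keeps it a forest), so $H(S')$ is a spanning tree of $H$. It remains to show $(V,\dep(S'))$ is acyclic: but $\dep(S')=\bigcup_{e\in S'}\dep(e)$, and for each $e\in S'\subseteq E(T)$ with $\dep(e)\neq\emptyset$ we have $|\dep(e)\cap E(T)|\geq \ell(e)=1=|\dep(e)|$, hence $\dep(e)\subseteq E(T)$. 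Therefore $\dep(S')\subseteq E(T)$, which is acyclic, so $(V,\dep(S'))$ is acyclic, completing the proof. I expect the main obstacle to be the forward direction's bookkeeping: making sure that when we inflate $S'$ to a genuine spanning tree of $G$ containing $\dep(S')$, we simultaneously (i) keep acyclicity, (ii) do not accidentally include extra edges of $S$ whose own dependencies are then unmet, and (iii) correctly handle dependency edges that cross component boundaries. The clean way around (ii)/(iii) is to observe that the only edges of $S$ we are ever forced to include are those in $S'$ (the between-component edges chosen) plus possibly some inside components, and for the latter we have freedom in choosing $T_{C_i}$, so we simply decline to include them unless they are in $\dep(S')$, in which case their dependency chain is length $\le 1$ and terminates.
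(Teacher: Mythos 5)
Your converse direction contains a step that fails. You set $S' = E(T)\cap S$ and justify acyclicity of $H(S')$ by saying that ``contracting connected subgraphs of a forest keeps it a forest''. But the sets $C_i$ are connected in $G-S$, not in $T$: the tree $T$ may be disconnected inside some $C_i$, and contracting a vertex set that does not induce a connected subgraph of $T$ can create cycles (parallel edges) in the contracted multigraph. Concretely, let $C_1$ be the path $a_1a_2a_3$, let $C_2$ be the path $b_1b_2b_3$, let $S=\{a_1b_1,\,a_3b_3\}$, and let $D$ have the arcs $(a_1a_2,a_1b_1)$ and $(b_2b_3,a_3b_3)$. Then $T=\{a_1a_2,\,b_1b_2,\,b_2b_3,\,a_1b_1,\,a_3b_3\}$ is a feasible spanning tree (each chosen $S$-edge has its unique dependency in $T$), yet $H(E(T)\cap S)$ consists of two parallel edges between the contracted vertices $C_1$ and $C_2$, i.e.\ it contains a cycle. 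So for your particular choice of $S'$ the claim is simply false. The lemma only asserts that a feasible $T$ \emph{contains} a suitable $S'$, and the paper exploits exactly this slack: it takes $S'$ to be the edge set of a spanning tree of the connected multigraph $H(E(T)\cap S)$. With that choice, your remaining argument --- $\ell(e)=u(e)=1$ forces $\dep(S')\subseteq E(T)$, hence $(V,\dep(S'))$ is acyclic --- goes through verbatim.

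In the forward direction, the ``delicate point'' you leave open (dependency edges of $S'$ crossing two components) does not arise, and observing this is precisely what makes the construction clean. By the assumed structure of $D$, the arcs go from $\{e_1,\dots,e_t\}$ to $S=\{e_{t+1},\dots,e_{2t}\}$, so $\dep(S)\cap S=\emptyset$; every edge of $\dep(S')$ is therefore an edge of $G-S$ and lies inside a single component $C_i$. Your idea of extending the forest $\dep(S')\cap E(G[C_i])$ to a spanning tree of $C_i$ using only edges of $G-S$, and then adding $S'$, is exactly the paper's construction, and the edges of $S$ internal to a component can simply be excluded (they are never dependencies). As written, though, your text hedges (``the construction would need adjustment, e.g.\ swapping edges'') instead of resolving the point, so the forward direction is incomplete until the disjointness observation $S\cap\dep(S')=\emptyset$ is made explicit; once it is, both directions coincide with the paper's proof.
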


\begin{proof}
First, consider $S' \subseteq S$ such that $H(S')$ is a spanning tree of $H$ and $(V, \dep(S'))$ is acyclic. Observe that, because $(V, \dep(S'))$ is acyclic, $S' \subseteq S$, and $S \cap \dep(S') = \emptyset$, we get that each $C_i$
  is a connected component of $G-S$. Thus, we can add edges of $G-S$ to
  $(V, \dep(S'))$ so as to obtain a spanning forest $F$ of $G$ having
  connected components with the same vertex sets as
  $C_1,\ldots,C_k$. After this, just add edges of $S'$ to $F$; because
  $H(S')$ is a spanning tree of $H$, we get that the obtained graph
  $T$ connects all components of $F$ without forming a cycle (i.e.,
  $T$ is a spanning tree of $G$). Finally, since
  $\dep(S')\subseteq E(T)$ and $S\cap E(T) = S'$, we get that $T$
  satisfies $D$.

  Conversely, let $T = (V, E_T)$ be a feasible solution of
  \texttt{DCST}$(G, D)$.  As $T$ is a spanning tree of $G$, we get
  that the edges in $E_T\cap S$ must connect the components of $G-S$,
  i.e., $H(E_T\cap S)$ is connected. Thus, choose $S'$ as the edge set
  of any spanning tree of $H(E_T\cap S)$. We get that $S'$ also forms
  a spanning tree of $H$, and since $S'\subseteq E_T$ and $T$
  satisfies $D$, we get that $\dep(S')\subseteq E_T$ and therefore $(V, \dep(S'))$
  cannot contain a cycle.
\end{proof}

In the following theorem we use the Matroid Intersection Theorem~\cite{edmonds1979matroid} to get result~\ref{P1}.


\begin{theorem}
  Let $G = (V, E)$ be a graph, $D = (E, A)$ be a digraph such that
  each component is a directed path of length at most $1$, and
  $\ell,u$ be such that $\ell(e) = u(e) = |\dep(e)|\in \{0,1\}$ for
  every $e\in E(G)$. Then \texttt{DCST}$(G,D)$ can be solved in
  polynomial time.
\end{theorem}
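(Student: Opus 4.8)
The plan is to reduce \texttt{DCST}$(G,D)$ under these restrictions to a \emph{matroid intersection} instance, using Lemma~\ref{lem:matchingIdea} as the bridge. By that lemma, it suffices to decide whether there is a set $S'\subseteq S$ such that $H(S')$ is a spanning tree of $H$ and $(V,\dep(S'))$ is acyclic. I would first argue that ``$H(S')$ is a spanning tree of $H$'' can be relaxed to ``$H(S')$ is \emph{acyclic} in $H$ together with $|S'|=k-1$'' (where $k=|{\cal C}|$), since a forest on $k$ vertices with $k-1$ edges is a spanning tree; and the maximum size of a common independent set will automatically detect whether such a set of size $k-1$ exists (if the maximum is smaller, no spanning tree exists, so the instance is infeasible).

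Next I would set up the two matroids on the common ground set $S$ (the edges of $G$ having a nonempty dependency set). The first is the \emph{graphic matroid} $M_1$ of the multigraph $H$ restricted to the multiedges corresponding to $S$: a subset $S'$ is independent in $M_1$ iff $H(S')$ is acyclic. The second matroid $M_2$ must encode that $(V,\dep(S'))$ is acyclic \emph{in $G$}. Here I would use the fact that $D$ is a collection of directed paths of length at most one, so the map $e\mapsto \dep(e)$ restricted to $S$ is a function into $E(G)$; in fact, since each nontrivial component of $D$ is a single arc $(e_i,e_{i+t})$, distinct edges of $S$ have \emph{distinct} dependencies, so $S'\mapsto \dep(S')$ is a bijection from $S'$ onto $\{\dep(e):e\in S'\}\subseteq E(G)$. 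Therefore $(V,\dep(S'))$ is acyclic iff the corresponding edge set $\{\dep(e):e\in S'\}$ is independent in the graphic matroid of $G$; pulling this back along the bijection gives a matroid $M_2$ on ground set $S$ whose independent sets are exactly those $S'$ with $(V,\dep(S'))$ acyclic. (One should check this pullback is genuinely a matroid — it is the image of a matroid under an injective relabeling of the ground set, which is again a matroid.)

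Then the algorithm is: compute a maximum common independent set $S^\ast$ of $M_1$ and $M_2$ via the Matroid Intersection Theorem~\cite{edmonds1979matroid}, which runs in polynomial time; output ``feasible'' iff $|S^\ast| = k-1$, and in that case reconstruct the spanning tree by the (polynomial) procedure described in the first half of the proof of Lemma~\ref{lem:matchingIdea} (add edges of $G-S$ to complete each $C_i$, then add $S^\ast$). Correctness in the forward direction is Lemma~\ref{lem:matchingIdea}; for the converse, if the instance is feasible then Lemma~\ref{lem:matchingIdea} gives an $S'$ that is independent in both matroids with $|S'|=k-1$, so the maximum common independent set has size at least $k-1$, and it cannot exceed $k-1$ because any $M_1$-independent set in a $k$-vertex (multi)graph has at most $k-1$ edges.

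The main obstacle I anticipate is being careful about the ground set and the ``loop'' edges: some edges of $S$ lie inside a single component $C_i$ of $G-S$ and thus become loops in $H$ (hence are dependent, i.e., never in any $M_1$-independent set) — I would simply exclude them from the ground set, and note that Lemma~\ref{lem:matchingIdea} already tells us they can be ignored. A second, more delicate point is the claim that distinct edges of $S$ have distinct dependencies: this is exactly where the hypothesis ``each component of $D$ is a directed path of length at most $1$'' (equivalently, $D$ is an oriented matching on its nontrivial part) is used, and it is what makes $S'\mapsto\dep(S')$ injective and hence lets us transport the graphic matroid of $G$ back to a matroid on $S$. Everything else is routine verification of matroid axioms and of the running time, which is polynomial since both matroids have independence oracles computable in time $O(n+m)$ and matroid intersection makes polynomially many oracle calls.
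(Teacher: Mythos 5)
Your proposal is correct and takes essentially the same route as the paper's own proof: both use Lemma~\ref{lem:matchingIdea} to reduce the problem to finding a common independent set of size $k-1$ in two matroids over ground set $S$ --- the graphic matroid of the contracted multigraph $H$ and the graphic matroid of $(V,\dep(S))$ (which you phrase as a pullback along the injective map $e\mapsto\dep(e)$) --- and then invoke the Matroid Intersection Theorem. Your additional remarks on loops, the injectivity of the dependency map, and why size exactly $k-1$ suffices simply make explicit details that the paper leaves implicit.
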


\begin{proof}
  Given $E'\subseteq E(G)$, denote by $G(E')$ the subgraph
  $(V(G), E')$.  Let $H$ be obtained as before. Also, let
  $\mathcal{I}_1 = \{S' \subseteq S \mid G(\dep(S')) \mbox{ is
    acyclic} \}$ and
  $\mathcal{I}_2 = \{S' \subseteq S \mid H(S') \mbox{ is acyclic} \}$.
  We have that $(S, \mathcal{I}_1)$ and $(S, \mathcal{I}_2)$ are
  matroids (on a common ground set $S$), since they are equivalent to
  the graphic matroids of the graph $(V, \dep(S))$ and of the multigraph $H$,
  respectively.  According to Lemma \ref{lem:matchingIdea},
  \texttt{DCST}$(G, D)$ is feasible if and only if there is
  $S' \in \mathcal{I}_1 \cap \mathcal{I}_2$ such that $|S'| = k - 1$,
  where $k$ is the number of components of $G' = (V, E \setminus S)$.
  The existence of such $S'$ can be checked in polynomial time using
  the Matroid Intersection Theorem \cite{edmonds1979matroid}.
\end{proof}

\subsection{Constraints on $\ell$ and $u$}\label{sec:poly_ell_u}

Recall that \texttt{CCMST}$(G,G_c,w)$ is equivalent to
\texttt{G-DCMST}$(G,D,0,0,w)$ when $D$ is a symmetric digraph.  From a
result in~\cite{zhangKabadiPunnen2011} for the \texttt{CCMST} problem,
we get that \texttt{G-DCMST}$(G, D, 0,0, w)$ is solvable in polynomial
time when $D$ is the union of complete digraphs.  We generalize this
result for upper bound functions that have the same value on each
clique (result~\ref{P2}).

\begin{theorem}
  Let $(G, D, 0, u, w)$ be an instance of \texttt{G-DCMST} such that
  $D = D_1 \cup D_2 \cup \dots \cup D_k$ is the union of $k$ disjoint
  complete digraphs and, for each $i \in [k]$, there exists
  $u_i \in [|V(D_i)|]$ such that $u(e) = u_i - 1$ for every
  $e \in V(D_i)$. Then, \texttt{G-DCMST}$(G, D, 0, u, w)$ can be
  solved in polynomial time.
\end{theorem}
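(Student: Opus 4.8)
The plan is to reduce \texttt{G-DCMST}$(G,D,0,u,w)$ under these hypotheses to a weighted matroid intersection problem, mirroring the strategy used for result~\ref{P1} but now in the weighted setting. The first matroid will be the graphic matroid $M_1$ of $G$ itself, whose bases are exactly the spanning trees of $G$; this encodes the requirement that the solution be a spanning tree. The second matroid $M_2$ will be a partition matroid that encodes the upper-bound constraints. Indeed, since $\ell=0$, a spanning tree $T$ $(0,u)$-satisfies $D$ if and only if, for every $e\in E(T)$, we have $|\dep(e)\cap E(T)|\le u(e)$. Because each $D_i$ is a \emph{complete} digraph on vertex set $V(D_i)\subseteq E(G)$, for an edge $e\in V(D_i)$ we have $\dep(e) = V(D_i)\setminus\{e\}$, so the constraint for $e$ reads $|E(T)\cap V(D_i)|-1\le u(e)=u_i-1$, i.e. $|E(T)\cap V(D_i)|\le u_i$. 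Crucially, this constraint is identical for every $e\in V(D_i)$ (this is exactly where the hypothesis $u(e)=u_i-1$ for all $e\in V(D_i)$ is used), and moreover if $|E(T)\cap V(D_i)|\le u_i$ then the constraint is satisfied for \emph{every} edge of $V(D_i)$ that lies in $T$ — and vacuously for those that do not. Hence $T$ $(0,u)$-satisfies $D$ precisely when $|E(T)\cap V(D_i)|\le u_i$ for every $i\in[k]$.

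This last condition is exactly the independence condition of a partition matroid. Concretely, let $E_0 = E(G)\setminus\bigcup_{i=1}^k V(D_i)$ be the set of edges not appearing in any $D_i$, and set the part bounds $d_i=u_i$ for $i\in[k]$ and $d_0=|E_0|$ (an unconstrained part). Then $M_2=(E(G),\mathcal I_2)$ with $\mathcal I_2=\{F\subseteq E(G): |F\cap V(D_i)|\le u_i \text{ for all } i\in[k]\}$ is a partition matroid on ground set $E(G)$, using that the $D_i$ are pairwise disjoint. By the preceding paragraph, the spanning trees of $G$ that $(0,u)$-satisfy $D$ are exactly the common independent sets of $M_1$ and $M_2$ of size $n-1$ (equivalently, the common bases, since $n-1=\mathrm{rank}(M_1)$). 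Weighted matroid intersection then lets us find, in polynomial time, a common independent set of $M_1$ and $M_2$ of maximum size and, among those of maximum size, of minimum $w$-weight; see~\cite{edmonds1979matroid, lawler1976combinatorial}. If the maximum common independent set has size $n-1$ we output it as the optimal spanning tree, and otherwise we report infeasibility. Both matroids admit trivial polynomial-time independence oracles (acyclicity testing for $M_1$; counting intersections with each $V(D_i)$ for $M_2$), so the whole procedure runs in polynomial time.

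The main obstacle — really the only nontrivial point — is the equivalence between ``$T$ $(0,u)$-satisfies $D$'' and ``$|E(T)\cap V(D_i)|\le u_i$ for all $i$''. I would state and verify this as a short claim before invoking matroid intersection, being careful about the two directions and about the vacuous case of edges $e\in V(D_i)\setminus E(T)$ (for which the condition $\ell(e)\le|\dep(e)\cap E(T)|\le u(e)$ imposes nothing, since the definition of $(\ell,u)$-satisfaction ranges only over $e\in E(H)$). One should also note explicitly that the identical-threshold hypothesis is what makes the upper-bound constraints collapse into a single per-clique cardinality bound, which is what yields a \emph{partition} matroid rather than some more complicated structure; without it, the reduction to matroid intersection would break down. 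A minor remark worth including: the unconstrained edges in $E_0$ can be left unbounded (or bounded by $|E_0|$), so they never restrict the tree, consistent with the fact that such edges have empty dependency set and their own constraints are vacuous.
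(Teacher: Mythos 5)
Your proposal is correct and follows essentially the same route as the paper: model the spanning-tree requirement by the graphic matroid of $G$, collapse the per-edge upper bounds into per-clique cardinality bounds $|E(T)\cap V(D_i)|\le u_i$ (using that each $D_i$ is complete and the threshold is uniform on $V(D_i)$), encode these by a partition matroid, and solve by (weighted) matroid intersection. Your write-up is in fact slightly more careful than the paper's on the equivalence claim and on invoking the weighted version of matroid intersection for the optimization objective, but the underlying argument is the same.
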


\begin{proof}
  Note that, in this case, a solution for
  \texttt{G-DCMST}$(G, D, 0, u, w)$ can have at most $u_i$ edges from
  $D_i$, for each $i \in [k]$.  We show that solving such an instance can be formulated as a matroid intersection problem.
  

  
  Observe that a subgraph
  $T$ of $G$ $(0,u)$-satisfies $D$ if and only if
  $|E(T)\cap D_i|\le u_i$ for every $i\in [k]$. Therefore, if $M$ is
  the partition matroid on $E(G)$ defined by
  $\{V(D_1),\ldots,V(D_k)\}$ and $\{u_1,\ldots,u_k\}$, we get that
  $T\subseteq G$ is a solution for our problem if and only if $T$ is a
  spanning tree of $G$ and $E(T)$ is an independent set of
  $M$. Hence, if $M'$ is the graphic matroid associated with $G$ (where the independent sets are the sets of edges inducing a spanning tree of $G$), we can  solve our problem in polynomial time by applying  the Matroid Intersection Theorem~\cite{edmonds1979matroid} to the matroids $M$ and $M'$. 
\end{proof}

Observe that the same approach used in the previous theorem does not
work when the values of $u$ can differ inside the same clique. For
instance, suppose that $D_1$ is the complete digraph on edges
$\{e_1,\ldots,e_4\}$ and that $u(e_1)=1$ and $u(e_i)=2$ for every
$i\in \{2,3,4\}$. Then $S_1 =\{e_1,e_2\}$ and $S_2=\{e_2,e_3,e_4\}$
are acceptable subsets within a solution, however because of $e_1$,
there does not exist $e\in S_2\setminus S_1$ such that $S_1\cup \{e\}$
is an acceptable subset. This means that the \emph{augmentation
  property} (cf. Section~\ref{sec:defs}) is not satisfied and the subsets that define feasible
solutions do {\sl not} form a matroid.

\subsection{Exponential exact algorithm}\label{sec:exponential_algorithm}

In this section, we present an exponential exact algorithm for
\texttt{G-DCST}$(G,D,\ell,u)$, as states result~\ref{P3}. Recall that, as a consequence of Theorem 3 in~\cite{vianaCampelo2019}
for the optimization problems \texttt{L-DCMST} and \texttt{A-DCMST}, we get that the {\sl optimization} problem \texttt{G-DCMST}$(G,D,\ell,u, w)$ is $\W[2]$-hard when parameterized by $n=|V(G)|$. 
The importance of the algorithm below, despite its simplicity, is
that it separates the complexity of the feasibility and the optimization problems. 

\begin{theorem}
  \texttt{G-DCST}$(G,D,\ell,u)$ can be solved in time
  $O(2^{m}\cdot (n+m))$, where $n=|V(G)|$ and $m=|E(G)|$.
\end{theorem}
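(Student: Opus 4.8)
The plan is an exhaustive search over edge subsets, and I expect the whole argument to be short. I would iterate over all $2^{m}$ subsets $F\subseteq E(G)$ and, for each one, perform two tests: (i) decide whether $(V,F)$ is a spanning tree of $G$, and (ii) if it is, decide whether $F$ $(\ell,u)$-satisfies $D$. The algorithm outputs \yes exactly when some $F$ passes both tests. Correctness is immediate from the definition of \texttt{G-DCST}: a spanning tree of $G$ that $(\ell,u)$-satisfies $D$ is, viewed as an edge set, precisely a subset $F$ surviving both tests, and conversely.

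For test (i), $(V,F)$ is a spanning tree if and only if $|F|=n-1$ and $(V,F)$ is connected (equivalently, acyclic), which is checkable in $O(n+m)$ time by a single graph traversal or by a union--find sweep over $F$. For test (ii), I would keep a Boolean array recording membership in $F$, and then, for each edge $e\in F$, compute $|\dep(e)\cap E(F)|$ and compare it with $\ell(e)$ and $u(e)$; since $F$ is a spanning tree, only its $n-1$ edges need to be examined. The one place that needs a little care is that, when $D$ is dense, rescanning all dependency lists from scratch could exceed the $O(n+m)$ per-subset budget; I would avoid this by enumerating the $2^m$ subsets in Gray-code order, so that consecutive subsets differ in a single edge $e^{\ast}$, and by maintaining, for every edge $e$, a running value of $|\dep(e)\cap E(F)|$ together with a counter of how many edges of the current $F$ violate their bounds. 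Toggling $e^{\ast}$ affects these data only for $e^{\ast}$ itself and for the out-neighbours of $e^{\ast}$ in $D$, so the dependency bookkeeping costs $O(\Delta^{+}(D))=O(m)$ per step, i.e.\ $O(m\cdot 2^{m})$ in total, which is absorbed by the claimed bound.

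Putting the pieces together, the enumeration makes $2^{m}$ iterations, each of cost $O(n+m)$ (the spanning-tree test dominating the dependency update), for a total of $O(2^{m}\cdot(n+m))$, as claimed. I do not expect any genuine obstacle here; the theorem is deliberately simple, its only subtlety being the cheap maintenance of the dependency constraints. Its point, as stressed in the paper, is the contrast it provides: since $O(2^{m}\cdot(n+m))=2^{O(n+m)}$, the algorithm is asymptotically optimal by the lower bound already established in Theorem~\ref{thr:complexityDCSTstars} (no $2^{o(n+m)}$ algorithm under the \ETH), yet the optimization variant \texttt{G-DCMST} is $\W[2]$-hard parameterized by $n=|V(G)|$, so this exponential-time solvability of the feasibility problem does not carry over to the optimization problem unless $\FPT=\W[1]$.
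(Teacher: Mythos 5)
Your proposal is correct and follows essentially the same route as the paper: a brute-force enumeration of all $2^{m}$ edge subsets, checking for each whether it is a spanning tree and whether it $(\ell,u)$-satisfies $D$. Your Gray-code bookkeeping is in fact a welcome refinement, since the paper simply asserts an $O(n+m)$ per-subset check even though a naive rescan of the dependency lists could cost up to $\Theta(|E(D)|)$ per subset when $D$ is dense; your incremental maintenance honestly attains the stated $O(2^{m}\cdot(n+m))$ bound.
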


\begin{proof}
  It suffices to observe that, given a subset $S\subseteq E(G)$, one
  can test in time $O(n+m)$ whether $S$ forms a spanning tree of $G$,
  and whether the constraints imposed by $D,\ell,u$ are satisfied by
  $S$. Because there are $2^{m}$ possible subsets to be tested, the
  theorem follows.
\end{proof}

\section{Modeling other constrained problems}	 \label{sec:models}

As discussed in the introduction, the spanning tree problem has
been investigated under the most various constraints. In this section,
we show that some of them can be modeled as special cases of our
problem. We have already remarked that this is the case for the
\textsc{Conflict Constrained Spanning Tree} problem. One should also notice that, in all
the reductions presented below, when the feasibility version
reduces to our problem, so does the optimization version.
It is a matter of observing that the graph in the source instance is a subgraph of the graph in the \texttt{G-DCST} instance, and that the reduction preserves the solution value when keeping the weights of the initial edges and setting to zero the weights of the new edges.

In Subsection \ref{subsec:forcingModeling}, we present a reduction from \textsc{Forcing
  CST} (denoted by \texttt{FCST}), in Subsection
\ref{subsec:maxdegModeling}, a reduction from \textsc{Maximum Degree
  CST} (denoted by \texttt{MDST}), and in Section~\ref{subsec:mindegModeling}, from \textsc{Minimum Degree
  CST} and \textsc{Fixed-Leaves Minimum Degree CST} (denoted by \texttt{mDST} and \texttt{FmDST}, respectively).

\subsection{Forcing constrained spanning trees} \label{subsec:forcingModeling}

Recall that, given graphs $G$ and $D$ such that $V(D) = E(G)$,
\texttt{FCST} consists in deciding whether $G$ has a spanning tree $T$
such that $|E(T)\cap \{e,f\}|\ge 1$ for every $ef\in E(D)$.

\begin{theorem}
Denote by $\texttt{G-DCST}^*$ the problem $\texttt{G-DCST}$ restricted to instances $(G', D', \ell, 2)$ such that $\ell(e)\in \{0,1\}$
  for every $e\in E(G')$ and $\Delta^-(D')=2$. Then, $\texttt{FCST}\preceq_\P \texttt{G-DCST}^*$.
\end{theorem}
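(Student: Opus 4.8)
The plan is to encode each forcing pair $ef\in E(D)$ by a pendant ``gadget edge'' $g_{ef}$ which, being a bridge, must belong to every spanning tree; placing a lower bound $\ell(g_{ef})=1$ on its dependencies, with $\dep_{D'}(g_{ef})=\{e,f\}$, then forces at least one of $e,f$ into the tree, which is exactly the forcing constraint. Since $|\dep_{D'}(g_{ef})|=2$, the uniform upper bound $u\equiv 2$ is never binding, so it costs nothing; this is why the restricted problem $\texttt{G-DCST}^*$ suffices.

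Concretely, from an instance $(G,D)$ of \texttt{FCST} (with $V(D)=E(G)$) I would fix an arbitrary $q\in V(G)$ and let $G'$ be $G$ together with, for each $ef\in E(D)$, a new degree-one vertex $p_{ef}$ and a new edge $g_{ef}=p_{ef}q$. I would let $D'$ have ground set $E(G')$ and, for each $ef\in E(D)$, the two arcs $(e,g_{ef})$ and $(f,g_{ef})$; note that no arc enters an edge of $E(G)$, so $\dep_{D'}(e)=\emptyset$ for all $e\in E(G)$ while $\dep_{D'}(g_{ef})=\{e,f\}$. Finally I would set $\ell(g_{ef})=1$, $\ell(e)=0$ for $e\in E(G)$, and $u\equiv 2$. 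Assuming $E(D)\neq\emptyset$ (otherwise \texttt{FCST} is just the connectivity question, which maps to a fixed instance), we get $\Delta^-(D')=2$ and $\ell$ valued in $\{0,1\}$, so $(G',D',\ell,2)$ is a legitimate instance of $\texttt{G-DCST}^*$, clearly of size polynomial in that of $(G,D)$.

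For correctness I would argue both directions. If $T$ is a spanning tree of $G$ forcing $D$, then $T':=T\cup\{g_{ef}:ef\in E(D)\}$ is a spanning tree of $G'$ (one only adds pendant edges), and it $(\ell,2)$-satisfies $D'$: for $e\in E(G)$ we have $\dep_{D'}(e)\cap E(T')=\emptyset$, and for each $ef\in E(D)$ we have $\dep_{D'}(g_{ef})\cap E(T')=\{e,f\}\cap E(T)$, of size in $\{1,2\}$ since $T$ forces $D$. Conversely, if $T'$ is a spanning tree of $G'$ that $(\ell,2)$-satisfies $D'$, then every bridge $g_{ef}$ lies in $T'$, so $T:=E(T')\cap E(G)$ is a spanning tree of $G$, and for each $ef\in E(D)$ the constraint $1=\ell(g_{ef})\le|\dep_{D'}(g_{ef})\cap E(T')|=|\{e,f\}\cap E(T)|$ shows that $T$ forces $D$.

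This reduction is essentially bookkeeping, and I do not foresee a genuine obstacle; the only points deserving care are the boundary conventions on $\ell,u$ for edges with empty dependency set (which are consistent with what is set above, since the original edges have $\ell=u$ effectively $0$) and, for the weighted version, the observation that $G$ sits inside $G'$ with every $g_{ef}$ a bridge, so keeping the original weights and assigning each $g_{ef}$ weight $0$ yields a value-preserving reduction, in line with the general remark opening Section~\ref{sec:models}.
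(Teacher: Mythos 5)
Your reduction is the same as the paper's: attach, for each forcing pair $ef\in E(D)$, a pendant edge at a fixed vertex, give it the two arcs from $e$ and $f$ in $D'$, set $\ell=1$ on the pendant edges and $0$ elsewhere with $u\equiv 2$, and use the fact that pendant edges are bridges; your correctness argument (which you spell out in both directions, whereas the paper only sketches the converse) matches the paper's. The extra remarks on the $E(D)=\emptyset$ boundary case and on weight preservation are fine and consistent with the paper's conventions.
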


\begin{proof}
  Let $(G,D)$ be an instance of \texttt{FCST}, and construct $G',D'$
  as follows (cf. Figure~\ref{ForcingMST_reduction}). Choose any
  $v\in V(G)$, and let $G'$ be obtained from $G$ by adding, for each
  $ee'\in E(D)$, a pendant degree one vertex
  in $v$; denote the new edge by
  $p_{ee'}$. Then, let $D'$ be the digraph with vertex set $E(G')$ and
  arcs $(e,p_{ee'})$ and $(e',p_{ee'})$ for every $ee'\in
  E(D)$. Finally, let $\ell(e) = 0$ if $e\in E(G)$, and $\ell(e)=1$
  otherwise. We prove that $(G, D)$ is a \yes-instance of
  \texttt{FCST} if and only $(G, D, \ell, 2)$ is a \yes-instance of
  \texttt{G-DCST}.

  First, let $T$ be a solution for \texttt{FCST}, and let $T'$ be
  obtained from $T$ by adding $p_{ee'}$ for every $ee'\in
  E(D)$. Clearly $T'$ is a spanning tree of $G'$, and since
  $|E(T)\cap \{e,e'\}|\ge 1$ for every $ee'\in E(D)$, we get that
  $|E(T')\cap \dep_{D'}(p_{ee')}|\ge 1$. The reverse implication can be proved similarly.
\end{proof}

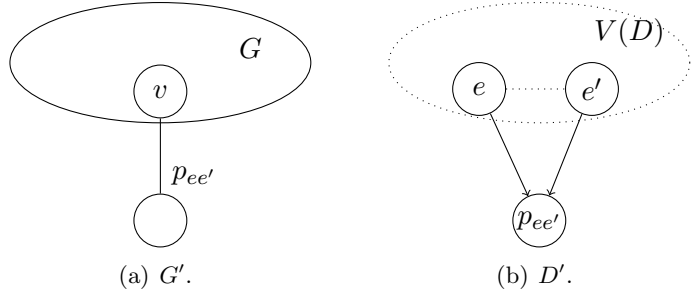
\begin{figure}[bht]
  \centering
  \subfigure[$G'$.]{
    \begin{tikzpicture}
      \tikzstyle{vertex}=[draw,shape=circle,minimum size=20pt,inner sep=0.5pt];

      \node[vertex] (ee) at (1, 1) {};
      \node[vertex] (v) [above = 1cm of ee] {$v$};
      \draw (1,3.1) ellipse (2cm and 0.8cm);
      \node (G) at (2.2,3.3) {$G$};
      \draw (ee) edge (v);
      \node (pee) at (1.45,1.6) {$p_{ee'}$};
    \end{tikzpicture}}
  \qquad
  \subfigure[$D'$.]{
    \begin{tikzpicture}
      \tikzstyle{vertex}=[draw,shape=circle,minimum size=20pt,inner sep=0.5pt];

      \node[vertex] (pee) at (1, 1) {$p_{ee'}$};
      \node[vertex] (e) at (0.2,2.75) {$e$};
      \node[vertex] (e2) at (1.7,2.75) {$e'$};
      \draw[dotted] (1,3.1) ellipse (2cm and 0.8cm);
      \node (D) at (2.2,3.5) {$V(D)$};
      \draw (e) edge[->] (pee);
      \draw (e2) edge[->] (pee);
      \draw[dotted] (e) to (e2);
    \end{tikzpicture}}

  \caption{Illustration of the reduction for \texttt{FCST}.}
  \label{ForcingMST_reduction}
\end{figure} 

\subsection{Max-degree constrained spanning trees} \label{subsec:maxdegModeling}

Given a graph $G = (V, E)$ and a positive integer $k$, recall that in
the \texttt{MDST}$(G, k)$ problem we want to find a spanning tree $T$
such that $d_T(v) \leq k$, for every $v \in V(G)$. We prove that a
generalized version of this problem reduces to ours. In
$\texttt{MDST}(G, d^*)$, instead of being given an integer $k$, we are
given a function $d^*:V(G)\rightarrow \mathbb{N}$ that separately sets
upper bounds to the degrees of the vertices.

\begin{theorem}
Denote by $\texttt{G-DCST}^{**}$ the problem $\texttt{G-DCST}$ restricted to instances $(G, D, 0, u)$. Then, $\texttt{MDST}\preceq_\P \texttt{G-DCST}^{**}$.
\end{theorem}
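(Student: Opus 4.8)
The plan is to reduce the generalized max-degree problem $\texttt{MDST}(G, d^*)$ to $\texttt{G-DCST}(G', D', 0, u)$, where the key idea is to encode the degree constraint at each vertex $v$ as an upper-bound dependency constraint on some gadget edge whose dependency set is exactly the set of edges of $G$ incident to $v$. First I would keep $G$ as a subgraph of $G'$, and for each vertex $v \in V(G)$ add a new pendant vertex $p_v$ adjacent only to $v$, with new edge $f_v = vp_v$. Since $f_v$ is a cut edge, it must belong to every spanning tree of $G'$, so it is a convenient ``probe'' edge. Then I would define $D'$ on vertex set $E(G')$ by adding, for every edge $e \in E(G)$ incident to $v$, the arc $(e, f_v)$; thus $\dep_{D'}(f_v) = \{e \in E(G) : v \in e\}$, and $\dep_{D'}(e) = \emptyset$ for every $e \in E(G)$ and every $f_u$. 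Finally I would set $\ell \equiv 0$ on all of $E(G')$, and $u(f_v) = d^*(v)$ for each $v$, while $u(e) = |\dep_{D'}(e)| = 0$ for every other edge (so that $u(e) \in \{0, d^*(v)\}$, matching the claimed shape $u(e) \in \{0,k\}$ when $d^*$ is the constant $k$).

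Next I would verify correctness. For the forward direction, given a spanning tree $T$ of $G$ with $d_T(v) \le d^*(v)$ for all $v$, let $T' = T \cup \{f_v : v \in V(G)\}$. Since the $f_v$ are pendant edges at distinct new leaves, $T'$ is a spanning tree of $G'$. For each $v$, $|E(T') \cap \dep_{D'}(f_v)| = |\{e \in E(T) : v \in e\}| = d_T(v) \le d^*(v) = u(f_v)$, and the lower bound $0$ is trivial; all other edges have empty dependency set, so $(0,u)$-satisfaction holds. For the reverse direction, let $T'$ be a spanning tree of $G'$ that $(0,u)$-satisfies $D'$; since every $f_v$ is a cut edge, $\{f_v : v \in V(G)\} \subseteq E(T')$, and hence $T := T' - \{p_v : v\}$ is a spanning tree of $G$. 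For each $v$, $d_T(v) = |\{e \in E(T) : v \in e\}| = |E(T') \cap \dep_{D'}(f_v)| \le u(f_v) = d^*(v)$, so $T$ witnesses a \yes-instance of $\texttt{MDST}(G, d^*)$. The construction is clearly polynomial, giving $\texttt{MDST} \preceq_\P \texttt{G-DCST}^{**}$.

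I do not anticipate a serious obstacle here; the argument is structurally identical to the \texttt{FCST} reduction just above (pendant probe edges whose in-neighbourhood in $D'$ encodes the local constraint). The only point requiring a little care is making sure the dependency sets are set up so that $u(e) = |\dep(e)|$ (hence vacuous) on the original edges while $u(f_v)$ alone carries the degree bound — this is what guarantees that \emph{only} the new gadget edges impose nontrivial constraints, and it is also what makes the weight-preservation remark from the beginning of Section~\ref{sec:models} apply verbatim (give the new edges $f_v$ weight zero), so the reduction transfers to the optimization problem \texttt{MDMST} as well.
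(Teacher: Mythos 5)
Your reduction is correct and is essentially identical to the paper's: your pendant vertices $p_v$ with probe edges $f_v=vp_v$ are exactly the paper's copies $v'$ with edges $vv'$, the arcs $(e,f_v)$ for edges $e$ incident to $v$ match the paper's $A=\{(uv,uu'),(uv,vv')\mid uv\in E\}$, and the bounds $u(f_v)=d^*(v)$, $u(e)=0$ on original edges coincide with the paper's choice. The correctness argument (cut edges force the probe edges into every spanning tree, so the dependency count at $f_v$ equals $d_T(v)$) is the same as well.
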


\begin{proof}
  Let $(G, d^*)$ be an instance of \texttt{MDST}.  We build an
  equivalent instance $(G', D, 0, u)$ of \texttt{G-DCST} as follows (cf. Figure
  \ref{MaxDegMST_reduction}).
  \begin{itemize}
  \item $G' = (V \cup V', E \cup E')$, where
    $V' = \{v' \mid v \in V\}$ has a copy of each vertex, and
    \mbox{$E' = \{vv' \mid v \in V\}$} connects each vertex $v\in V$
    to its copy $v'\in V'$.
  \item $D = (E \cup E', A)$, where
    $A = \{(uv, uu'), (uv, vv') \mid uv \in E\}$.
  \item $u(e) = 0$, for each $e \in E$, while $u(vv') = d^*(v)$, for
    each $v \in V$.
  \end{itemize}

  Observe that $\dep(vv')$ is the set of edges incident to $v$ in $G$
  for every $v\in V$, and that $\dep(e) = \emptyset$ for every
  $e\in E$. Also, note that because $\ell(e) = u(e) = 0 = |\dep(e)|$
  for every $e\in E$, and $\ell(e) = 0$ for every $e\in E'$, the only
  real constraints being imposed by $D$ are upper bounds on the chosen
  dependencies for edges in $E'$. More specifically, we get that a
  spanning tree $T'$ of $G'$ $(\ell,u)$-satisfies $D$ if and only if
  $|E(T')\cap \dep(vv')|\le d^*(v)$ for every $vv'\in E'\cap
  E(T')$. Note that each $v'$ has degree one
  in $G'$, which implies that every edge in $E'$ must be part of every
  spanning tree of $G'$. Thus, we get that $(V,E_T)\subseteq G$ is a
  tree that satisfies the maximum degree constraints if and only if
  $(V\cup V', E_T\cup E')$ $(\ell,u)$-satisfies $D$.\textbf{}
\end{proof}

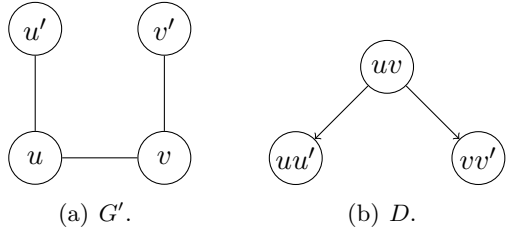
\begin{figure}[bht]
  \centering
  \subfigure[$G'$.]{
    \begin{tikzpicture}
      \tikzstyle{vertex}=[draw,shape=circle,minimum size=20pt,inner sep=0.5pt];

      \node[vertex] (u) at (1, 1) {$u$};
      \node[vertex] (v) [right = 1cm of u] {$v$};
      \node[vertex] (u') [above = 1cm of u] {$u'$};
      \node[vertex] (v') [above = 1cm of v] {$v'$};

      \draw (u) edge (v);
      \draw (u) edge (u');
      \draw (v) edge (v');
    \end{tikzpicture}}
  \qquad
  \subfigure[$D$.]{
    \begin{tikzpicture}
      \tikzstyle{vertex}=[draw,shape=circle,minimum size=20pt,inner sep=0.5pt];

      \node[vertex] (uv) at (1, 1) {$uv$};
      \node[vertex] (eu) [below left = 1cm of uv] {$uu'$};
      \node[vertex] (ev) [below right = 1cm of uv] {$vv'$};

      \draw (uv) edge[->] (eu);
      \draw (uv) edge[->] (ev);
    \end{tikzpicture}}

  \caption{Illustration of the reduction for \texttt{MDST}.}
  \label{MaxDegMST_reduction}
\end{figure} 

\subsection{Minimum degree constrained spanning trees} \label{subsec:mindegModeling}

Given a graph $G$ and a positive integer $k$, recall that the
\texttt{mDST}$(G, k)$ problem consists in finding a spanning tree $T$
of $G$ such that $d_T(v)\ge k$ for every nonleaf vertex $v\in V(T)$.
We introduce a generalized version of this problem, denoted by
\texttt{G-mDST}$(G, \ell, u)$, where we replace the integer $k$ by
functions $\ell,u: V\to \mathbb{N}$ and require that each nonleaf
vertex $v$ of $T$ satisfies $\ell(v)\leq d_T(v) \leq u(v)$. Clearly,
\texttt{mDST}$(G, k)$ is a special case of
\texttt{G-mDST}$(G, \ell, u)$ where $\ell(v)=k$ and $u(v)=d(v)$ for
every $v\in V(G)$.

\begin{theorem} \label{MD-MST_particular_G-DCMST} 
  $\texttt{G-mDST}\preceq_\P \texttt{G-DCST}$.
\end{theorem}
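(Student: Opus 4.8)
The plan is to attach to every vertex of $G$ a small gadget that lets a spanning tree of the new graph \emph{declare} whether that vertex is a leaf or an internal vertex, and that activates the corresponding degree constraint. First I would preprocess the $\texttt{G-mDST}$ instance $(G,\ell,u)$: assume $G$ is connected with at least two vertices (otherwise the instance is trivial), cap $u(v)$ at $d_G(v)$, and replace $\ell(v)$ by $\max\{\ell(v),2\}$, which changes nothing since every nonleaf vertex already has degree at least $2$. After this, call $v$ \emph{free} if $2\le \ell(v)\le u(v)\le d_G(v)$, and \emph{leaf-forced} otherwise (these $v$ have $[\ell(v),u(v)]=\emptyset$ and so must be leaves in any feasible tree). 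Write $E_v$ for the set of edges of $G$ incident with $v$.

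The construction of $(G',D',\ell',u')$ is as follows. Every original edge $e$ gets $\dep_{D'}(e)=\emptyset$ (so $\ell'(e)=u'(e)=0$). For a leaf-forced vertex $v$ we add a new pendant vertex $v^*$ and the edge $vv^*$, put $\dep_{D'}(vv^*)=E_v$, and set $\ell'(vv^*)=u'(vv^*)=1$; since $vv^*$ is a bridge, it lies in every spanning tree of $G'$, so this constraint forces $|E_v\cap E(T')|=1$. For a free vertex $v$ we add two new vertices $v^+,v^-$ and the three edges $vv^+$, $vv^-$, $v^+v^-$ (a triangle), put $\dep_{D'}(vv^+)=\dep_{D'}(vv^-)=E_v$ and $\dep_{D'}(v^+v^-)=\emptyset$, and set $\ell'(vv^+)=\ell(v)$, $u'(vv^+)=u(v)$, $\ell'(vv^-)=0$, $u'(vv^-)=1$. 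Finally $D'$ has vertex set $E(G')$ and exactly the arcs forced by these dependency sets. The instance is polynomial in the input and, since $G$ is a subgraph of $G'$ and all new edges can be weighted $0$, it preserves solution values, so the same reduction handles the optimization version.

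For correctness, the key observation is that in any spanning tree $T'$ of $G'$, exactly two of the three triangle edges at a free vertex $v$ belong to $T'$ (all three would form a cycle; and since $v^+,v^-$ have no neighbours outside $\{v,v^+,v^-\}$, connectivity forces at least two). Setting $T=T'[V(G)]$, note that $d_T(v)=|E_v\cap E(T')|$, and that $T$ is always a spanning tree of $G$, because a path of $T'$ between two original vertices cannot enter a gadget (it would have to leave it again through the unique attachment vertex, which a simple path cannot revisit). Now, if $vv^+\in E(T')$ then the constraint on $vv^+$ gives $\ell(v)\le d_T(v)\le u(v)$, so $d_T(v)\ge 2$ and $v$ is a legitimate internal vertex of $T$; if $vv^+\notin E(T')$ then $vv^-\in E(T')$, its constraint gives $d_T(v)\le 1$, and since $d_T(v)\ge 1$ we get $d_T(v)=1$, i.e.\ $v$ is a leaf. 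Leaf-forced vertices are leaves of $T$ by the $vv^*$ constraint. Hence $T$ is feasible for $\texttt{G-mDST}$. Conversely, given a feasible $\texttt{G-mDST}$-tree $T$, extend it by choosing $\{vv^+,v^+v^-\}$ inside each free gadget with $v$ internal in $T$, $\{vv^-,v^+v^-\}$ with $v$ a leaf of $T$, and $vv^*$ at each leaf-forced vertex; a routine check shows this is a spanning tree of $G'$ that $(\ell',u')$-satisfies $D'$.

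The step I expect to be the real obstacle is designing the gadget so that the purely interval-shaped constraints of $\texttt{G-DCST}$ can emulate the \emph{disjunction} ``$v$ is a leaf, or $\ell(v)\le d_T(v)\le u(v)$''. The triangle is what converts a single near-bridge edge into a genuine binary switch (exactly one of $vv^+$, $vv^-$ may be omitted), and the normalization $\ell(v)\leftarrow\max\{\ell(v),2\}$ is what makes the ``internal'' branch automatically exclude the otherwise-forbidden degree value $1$. The remaining verifications — that induced restrictions of spanning trees of $G'$ are spanning trees of $G$, and the bookkeeping of the values $\ell',u'$ — are routine.
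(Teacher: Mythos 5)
Your reduction is correct and takes essentially the same route as the paper's: each vertex $v$ receives a small gadget acting as a binary switch between an ``internal'' gadget edge whose dependency set is $E_v$ and carries the interval $[\ell(v),u(v)]$, and a ``leaf'' gadget edge forcing at most one incident edge of $G$ to be chosen, with all original edges having empty dependency sets so that $T'[V(G)]$ is a spanning tree of $G$ realizing exactly the disjunction ``leaf or degree in $[\ell(v),u(v)]$''. The only difference is cosmetic: the paper implements the switch with a four-cycle $v,v_1,v_3,v_2$ whose edges $v_1v_3,v_2v_3$ are mutually dependent with bounds $[1,1]$ (forcing exactly one of $vv_1,vv_2$ into the tree), whereas you use a triangle with bounds $[0,1]$ on the leaf edge together with a separate pendant gadget for the normalized ``leaf-forced'' vertices, which the paper's uniform gadget handles without case distinction.
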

\begin{proof}
  Given an instance $(G = (V, E), \ell, u)$ of \texttt{G-mDST}, we
  build an instance $(G', D, \ell', u')$ of \texttt{G-DCST} as follows
  (cf. Figure~\ref{MD-MST_reduction}):
  \begin{itemize}
  \item $G' = (V \cup V', E \cup E')$, where
    $V' = \{v_1, v_2, v_3 \mid v \in V\}$ and
    $E' = \{v v_1, vv_2, v_1v_3, v_2v_3 \mid v \in V\}$;
  \item $D = (E \cup E', A_1\cup A_2)$, where
    $A_1 = \{(u v, vv_1), (uv, vv_2) \mid uv\in E\}$ and
    $A_2 = \{(v_2v_3, v_1v_3), (v_1v_3, v_2v_3) \mid v \in V\}$;
  \item For each $e \in E$, let $\ell'(e) = u'(e) = 0$; and for each
    $v \in V$, let $\ell'(vv_1) = \ell(v)$, $u'(vv_1) = u(v)$, and
    $\ell'(e) = u'(e) = 1$, for each $e\in \{vv_2,v_1v_3,v_2v_3\}$.
  \end{itemize}

  Observe that, for each $v\in V$ and $i\in \{1,2\}$, we have that
  $\dep(vv_i)$ is the set of edges incident to $v$ in $G$. We show that
  $(G, \ell, u)$ is a \yes-instance of \texttt{G-mDST} if and only
  $(G', D, \ell', u')$ is a \yes-instance of \texttt{G-DCST}.

\begin{figure}
  \centering
  \subfigure[$G'$.]{
    \begin{tikzpicture}
      \tikzstyle{vertex}=[draw,shape=circle,minimum size=20pt,inner sep=0.5pt];

      \node[vertex] (u) at (1, 1) {$u$};
      \node[vertex] (v) [right = 3cm of u] {$v$};

      \node[vertex] (v1) [above left = 1cm of v] {$v_1$};
      \node[vertex] (v2) [above right = 1cm of v] {$v_2$};
      \node[vertex] (v3) [above = 1cm of v] {$v_3$};

      \node[vertex] (u1) [above left = 1cm of u] {$u_1$};
      \node[vertex] (u2) [above right = 1cm of u] {$u_2$};
      \node[vertex] (u3) [above = 1cm of u] {$u_3$};

      \draw (u) edge (v);


      \draw (u) edge (u1);
      \draw (u) edge (u2);
      \draw (u1) edge (u3);
      \draw (u2) edge (u3);

      
      \draw (v) edge (v1);
      \draw (v) edge (v2);
      \draw (v1) edge (v3);
      \draw (v2) edge (v3);
    \end{tikzpicture}}
  \quad
  \subfigure[$D$.]{
    \begin{tikzpicture}
      \tikzstyle{vertex}=[draw,shape=circle,minimum size=20pt,inner sep=0.5pt];

      \node[vertex] (uv) at (1, 1) {$uv$};
      \node[vertex] (e1u) [left = 1cm of uv] {$uu_1$};
      \node[vertex] (e2u) [below left = 1cm of uv] {$uu_2$};
      \node[vertex] (e1v) [right = 1cm of uv] {$vv_1$};
      \node[vertex] (e2v) [below right = 1cm of uv] {$vv_2$};

      \node[vertex] (e3u) [right = 0.5cm of e1v] {$u_1u_3$};
      \node[vertex] (e4u) [right = 1cm of e3u] {$u_2u_3$};

      \node[vertex] (e3v) [below = 0.5cm of e3u] {$v_1v_3$};
      \node[vertex] (e4v) [right = 1cm of e3v] {$v_2v_3$};

      \draw (uv) edge[->] (e1v);
      \draw (uv) edge[->] (e2v);
      \draw (uv) edge[->] (e1u);
      \draw (uv) edge[->] (e2u);

      \draw (e3u) edge[->, bend left = 30] (e4u);
      \draw (e4u) edge[->, bend left = 30] (e3u);

      \draw (e3v) edge[->, bend left = 30] (e4v);
      \draw (e4v) edge[->, bend left = 30] (e3v);
    \end{tikzpicture}}

  \caption{Illustration of the reduction for \texttt{G-mDST}.}
  \label{MD-MST_reduction}
\end{figure}

  First, let $T = (V, E_T)$ be a solution for
  \texttt{G-mDST}$(G, \ell, u)$.  We expand $T$ into
  $T' = (V \cup V', E_{T'}) \subseteq G'$, where $E_{T'}$ is equal to
  $E_T$ together with the following edges. For each $v \in V$, add
  $v_1v_3$ and $v_2v_3$, and if $v$ is a leaf in $T$ then add $vv_2$,
  otherwise add $vv_1$. Observe that $T'$ is a spanning tree of $G'$
  such that $T = T'[V]$.  It remains to show that the $D$-dependencies
  are satisfied.  Every edge $e \in E_T$ has $\dep(e) = \emptyset$ and
  $\ell'(e) = u'(e) = 0$, so
  $\ell'(e) \leq |\dep(e) \cap E(T')| \leq u'(e)$ trivially follows.
  The remaining types of edges in $E_{T'}\setminus E_T$ are analyzed below:
  \begin{enumerate}
  \item[(i)]$v_iv_3$, for $v \in V$ and $i\in \{1,2\}$: recall that
    $v_1v_3$ is the unique dependency of $v_2v_3$, and vice-versa, and
    they are both in $T'$. Hence
    $1=\ell'(v_iv) \leq |\dep(v_iv) \cap E(T')| \leq u'(v_iv)=1$.
  \item[(ii)] $vv_2$: by construction of $T'$, we get that $v$ is necessarily a leaf in
    $T$. Since exactly one edge of $\dep(vv_2)$ is in $E_{T'}$, namely
    the edge incident to $v$ in $T$, we have that
    $1=\ell'(vv_2) \leq |\dep(vv_2) \cap E(T')| \leq u'(vv_2)=1$.
  \item[(iii)] $vv_1$: by construction of $T'$, we get that $v$ in $T$ is not
    a leaf in $T$.  From the feasibility of $T$, we know that
    $\ell(v)\leq d_T(v) \leq u(v)$, i.e., at least $\ell(v)$ and at
    most $u(v)$ edges of $\dep(e_1^v)$ are in $E_T \subseteq E_{T'}$,
    which implies that
    $\ell(v) = \ell'(vv_1) \leq |\dep(vv_1) \cap E(T')| \leq u'(vv_1) =
    u(v)$.
  \end{enumerate}

  Conversely, suppose that $T' = (V \cup V', E_{T'})$ is a solution
  for \texttt{G-DCST}$(G', D, \ell', u')$.  We show that $T=T'[V]$ is
  a solution for \texttt{G-mDST}$(G, \ell, u)$.  Due to the dependency
  constraints, we get that $v_1v_3$ and $v_2v_3$ are in $T'$, for each
  $v \in V$.  From this, and since $vv_1$ and $vv_2$ are a cut in
  $G'$, exactly one of $vv_1$ and $vv_2$ is in $T'$, for each
  $v \in V$.  Take $v \in V$.  If $vv_1$ is in $T'$, then there are
  at least $\ell(v)$ and at most $u(v)$ edges incident to $v$ in $T'$.  And if
  $vv_2$ is in $T'$, then there is exactly one edge $uv \in E$ in
  $E_{T'}$.  Therefore, either $\ell(v)\leq d_T(v) \leq u(v)$ or
  $d_T(v) = 1$. Since $T'$ is a spanning tree of $G'$, $T$ is a
  spanning tree of $G$, and thus $T$ is a solution of
  \texttt{GD-MST}$(G, \ell, u, w)$.
\end{proof}

Finally, given a graph $G$, a subset $C \subseteq V$, and a function
$\ell: C \to \mathbb{Z}^+$, recall that \texttt{FmDST}$(G, C, \ell)$
denotes the problem of finding a spanning tree $T$ of $G$ such that
$d_T(u) \geq \ell(u)$ for every $u \in C$, and $d_T(v) = 1$ for
every $v \in V \setminus C$ (i.e., the set of leaves is prefixed).
Observe that the same reduction of Theorem~\ref{MD-MST_particular_G-DCMST} can be applied to this problem
by removing edge $e_2^v$ for each $v \in C$, and edge $e_1^v$ for each
$v \in V \setminus C$. We then get the following:

\begin{theorem}
  $\texttt{FmDST}\preceq_\P \texttt{G-DCST}$.
\end{theorem}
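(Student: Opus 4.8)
The plan is to reuse, almost verbatim, the construction from the proof of Theorem~\ref{MD-MST_particular_G-DCMST}, exploiting the fact that in \texttt{FmDST} the set of leaves is fixed \emph{in advance}: instead of letting the spanning tree decide, for each $v\in V$, whether $v$ plays the ``leaf'' role (edge $vv_2$) or the ``non-leaf'' role (edge $vv_1$), I hard-wire this choice into $G'$ by deleting the gadget edge of the wrong role. Concretely, given an instance $(G=(V,E),C,\ell)$ of \texttt{FmDST} (we may assume $\ell(v)\le d_G(v)$ for every $v\in C$, as otherwise the instance is trivially a \no-instance), I attach to every $v\in V$ the three vertices $v_1,v_2,v_3$ exactly as in Theorem~\ref{MD-MST_particular_G-DCMST}, but I keep only the edges $vv_1,v_1v_3,v_2v_3$ when $v\in C$, and only $vv_2,v_1v_3,v_2v_3$ when $v\in V\setminus C$. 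The digraph $D$ is the one from that theorem with every arc whose head is a deleted edge removed; in particular $\dep(vv_1)$ (resp.\ $\dep(vv_2)$) is still the set of edges of $G$ incident to $v$, and $v_1v_3$ and $v_2v_3$ are still each other's unique dependency. Finally I set $\ell'(e)=u'(e)=0$ for $e\in E$; $\ell'(v_1v_3)=u'(v_1v_3)=\ell'(v_2v_3)=u'(v_2v_3)=1$ for every $v\in V$; $\ell'(vv_1)=\ell(v)$ and $u'(vv_1)=|\dep(vv_1)|=d_G(v)$ for $v\in C$; and $\ell'(vv_2)=u'(vv_2)=1$ for $v\in V\setminus C$. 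The instance $(G',D,\ell',u')$ is produced in time linear in the size of $(G,C,\ell)$.

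The key structural observation is that, after the deletions, each gadget forms a path attached to $G$ through a single bridge, so all gadget edges belong to every spanning tree of $G'$: namely $vv_1,v_1v_3,v_2v_3$ for $v\in C$ and $vv_2,v_1v_3,v_2v_3$ for $v\in V\setminus C$. For the forward direction, from a solution $T$ of \texttt{FmDST}$(G,C,\ell)$ I form $T'$ by adding to $T$ all these forced gadget edges; connectivity and an edge count show $T'$ is a spanning tree of $G'$, and it $(\ell',u')$-satisfies $D$ since: edges of $E$ have empty dependency set and null bounds; $v_1v_3$ and $v_2v_3$ lie in $T'$ and are each other's only dependency, giving the count~$1$; for $v\in C$ the dependencies of $vv_1$ in $T'$ are exactly the $d_T(v)$ edges of $T$ at $v$, and $\ell(v)\le d_T(v)\le d_G(v)$ holds by feasibility of $T$; and for $v\in V\setminus C$ the only dependency of $vv_2$ in $T'$ is the single edge of $T$ at $v$, matching $\ell'=u'=1$.

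For the converse, from a solution $T'$ of \texttt{G-DCST}$(G',D,\ell',u')$ I take $T=T'[V]$; pruning the bridge-attached gadgets from the tree $T'$ leaves a spanning tree of $G$, so $T$ is one. The bounds on $v_1v_3$ and $v_2v_3$ are met automatically; the bound on $vv_1$ (for $v\in C$), together with the forced membership $vv_1\in E(T')$, forces at least $\ell(v)$ edges of $G$ incident to $v$ into $T'$, i.e.\ $d_T(v)\ge\ell(v)$; and the bound on $vv_2$ (for $v\in V\setminus C$), together with $vv_2\in E(T')$, forces exactly one such edge, i.e.\ $d_T(v)=1$. Hence $T$ solves \texttt{FmDST}$(G,C,\ell)$. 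As usual in this section, $G$ is a subgraph of $G'$ and the new edges can be assigned weight zero, so the reduction also works for the optimization problem \texttt{FmDMST}.

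Since the argument is essentially a transcription of Theorem~\ref{MD-MST_particular_G-DCMST}, the only step that requires real care is the bookkeeping around the removed arcs: one must check that $\dep(vv_1)$ and $\dep(vv_2)$ are unaffected by the deletions and that taking $u'(vv_1)=d_G(v)$ indeed imposes no upper bound on $d_T(v)$, since \texttt{FmDST} constrains the degrees of the vertices of $C$ only from below. I foresee no other difficulty.
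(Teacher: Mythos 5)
Your construction is exactly the paper's: the paper proves this theorem by applying the reduction of Theorem~\ref{MD-MST_particular_G-DCMST} with edge $vv_2$ deleted for $v\in C$ and $vv_1$ deleted for $v\in V\setminus C$, which is precisely what you do (with the bounds instantiated as $\ell'(vv_1)=\ell(v)$, $u'(vv_1)=d_G(v)$ and $\ell'(vv_2)=u'(vv_2)=1$). Your write-up just spells out the bridge/forced-edge bookkeeping that the paper leaves implicit, and it is correct.
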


\section{Conclusion}	\label{sec:conclusion}

In this paper, we investigated a dependency constrained spanning tree
problem that generalizes many previously studied spanning tree problems with local constraints, as for instance degree constraints. We then inherit
all of the $\NP$-completeness results for these problems, as well as
polynomial results and practical approaches to our problem will
therefore hold for these other problems. Interestingly, there are other spanning tree problems that impose global
constraints on the tree, as for instance, a bound on the diameter of
the produced
tree~\cite{camerini1980complexity,camerini1983complexity}, or on the
number of leaves~\cite{garey1979computers,lu1992power}.  A good
question therefore is whether problems with this kind of constraints
can be modeled within our framework.

\begin{question}
  Can instances of CST problems with global constraints be modeled as
  \texttt{G-DCST} instances?
\end{question}

Concerning $\NP$-completeness results, we have investigated the
problem under restrictions on the dependency digraph $D$, and on the
lower and upper bound functions $\ell$ and $u$. Among other
restrictions, we have proved that \texttt{G-DCST} is $\NP$-complete
when $D$ is either a forest of directed paths, a forest of out-stars,
or a forest of in-stars, and each component has at most three vertices. In
the first two cases, we have considered all possible values for
constant functions $\ell,u$. The following cases for
in-stars remain open.

\begin{question}
  What is the complexity of \texttt{G-DCST}$(G,D,\ell,u)$ when $D$ is
  a forest of in-stars with at most three vertices and $(\ell,u)=(0,1)$
  or, for every $e\in E(G)$ such that $\dep(e)\neq \emptyset$,
  $\ell(e)=1$ and $u(e)\in \{1,2\}$?
\end{question}

Concerning positive results, we have proved that
\texttt{G-DCST}$(G,D,\dep,\dep)$ can be solved in polynomial time when
$D$ is an oriented matching. We ask whether this also holds for the
optimization problem (recall that \texttt{G-DCMST}$(G,D,0,0)$ is
polynomial in this case~\cite{DPS.09}).

\begin{question}
  What is the complexity of \texttt{G-DCMST}$(G,D,\dep,\dep)$ when $D$
  is an oriented matching?
\end{question}

Finally, we have proved that \texttt{G-DCMST} can be solved in
polynomial time when $\ell=0$, $D$ is a collection of symmetric graphs
$D_1,\ldots,D_k$, and $u(e)=u(e')$ whenever $e,e'$ are within the same
component, and that \texttt{G-DCST}$(G,D,\ell,u)$ can be solved in
time $O(2^{m}\cdot (n+m))$ by a naive brute-force algorithm, where $n=|V(G)|$ and $m=|E(G)|$. The
latter result is important in the face of the fact that, as a
byproduct of a result in~\cite{vianaCampelo2019}, we get that no
algorithm running in time $2^{O(n)}$ exists for the optimization
problem, unless \ETH fails. Also, the results presented in Section~\ref{sec:AtleastAll}
imply that no algorithm that runs in time $2^{o(n+m)}$ exists for the
feasibility problem under the \ETH, which means that the algorithm presented in
Section~\ref{sec:exponential_algorithm} is asymptotically optimal. We
mention that our algorithm can also be seen as an FPT algorithm
parameterized by $m$. We ask whether the problem is FPT under other
parameters.

\begin{question}
  Under which parameters is \texttt{G-DCST} or \texttt{G-DCMST} \FPT?
\end{question}

In order to identify parameters for the above question, note that the maximum degree of the input graph $G$ is {\sl not} enough, since a particular case of the problem is already \NP-complete restricted to graphs with maximum degree at most three~\cite{vianaCampelo2019}. Similarly, the maximum degree of the dependency graph $D$ is {\sl not} enough either, as another particular case of the problem is $\NP$-complete even if $D$
is a forest of paths of length at most two (see e.g.~\cite{darmann2011paths}, as well as our results presented in Section~\ref{sec:AtleastAll}).

A promising candidate parameter for obtaining \FPT algorithms is the \emph{treewidth} of the input graph $G$ (see~\cite{FPTBook} for the definition); note that the treewidth of the underlying graph of $D$ is {\sl not} enough by the last sentence of the above paragraph, since forests have treewidth~one. Suppose that, in order to use Courcelle's Theorem~\cite{Courcelle90} or any of its optimization variants~\cite{ArnborgLS91}, one tries to express the \texttt{G-DCST} problem in \emph{monadic second-order} ({\sf MSO}) logic. In order to guarantee that the dependencies of $D$ are satisfied for every edge $e$ of the desired spanning tree of the input graph $G$, one would probably need to {\sl evaluate} the functions $\ell(e)$ and $u(e)$ {\sl inside} the eventual {\sf MSO} formula, and this seems to be a fundamental hurdle since these values are a priori unrelated to the treewidth of $G$. Nevertheless, for the particular case of \texttt{G-DCST} (or \texttt{G-DCMST}) where both functions $\ell$ and $u$ are {\sl constants} (or even equal to some constant value that depends on the treewidth of $G$), it is indeed possible, using standard techniques, to write such an {\sf MSO} formula expressing the problem, and therefore it is \FPT parameterized by the treewidth of the input graph. Note that this restriction of the \texttt{G-DCST} problem is \NP-complete by Corollary~\ref{cor:hard-constant}, for every pair of positive integers $\ell,u$ with $\ell\le u$. The next natural step would be to consider as parameters {\sl both} the treewidth of $G$ and the maximum degree of $D$.

\bibliographystyle{plain}
\bibliography{paper}{}

\begin{thebibliography}{10}

\bibitem{AKGUN2010}
Ibrahim Akg{\"u}n and Barbaros Tansel.
\newblock Min-degree constrained minimum spanning tree problem: New formulation
  via {M}iller--{T}ucker--{Z}emlin constraints.
\newblock {\em Computers \& Operations Research}, 37(1):72--82, 2010.

\bibitem{de2012min}
Ana~Maria Almeida, Pedro Martins, and Maur{\'\i}cio~C de~Souza.
\newblock Min-degree constrained minimum spanning tree problem: complexity,
  properties, and formulations.
\newblock {\em International Transactions in Operational Research},
  19(3):323--352, 2012.

\bibitem{de2010md}
Ana~Maria Almeida, Pedro Martins, and Maur{\'\i}cio~C Souza.
\newblock md-{MST} is {NP}-hard for $d\geq 3$.
\newblock {\em Electronic Notes in Discrete Mathematics}, 36:9--15, 2010.

\bibitem{ArnborgLS91}
Stefan Arnborg, Jens Lagergren, and Detlef Seese.
\newblock Easy problems for tree-decomposable graphs.
\newblock {\em Journal of Algorithms}, 12(2):308--340, 1991.

\bibitem{CompComplBook}
Sanjeev Arora and Boaz Barak.
\newblock {\em Computational Complexity: A Modern Approach}.
\newblock Cambridge University Press, 2009.

\bibitem{AK.17}
Rom Aschner and Matthew~J. Katz.
\newblock Bounded-angle spanning tree: Modeling networks with angular
  constraints.
\newblock {\em Algorithmica}, 77(2):349--373, 2017.

\bibitem{BasteGPSST17}
Julien Baste, Didem G{\"{o}}z{\"{u}}pek, Christophe Paul, Ignasi Sau, Mordechai
  Shalom, and Dimitrios~M. Thilikos.
\newblock Parameterized complexity of finding a spanning tree with minimum
  reload cost diameter.
\newblock {\em Networks}, 75(3):259--277, 2020.

\bibitem{Bicalho2016}
Luis~Henrique Bicalho, Alexandre~Salles da~Cunha, and Abilio Lucena.
\newblock Branch-and-cut-and-price algorithms for the degree constrained
  minimum spanning tree problem.
\newblock {\em Computational Optimization and Applications}, 63(3):755--792,
  2016.

\bibitem{BJ.93}
Hans~L. Bodlaender and Klaus Jansen.
\newblock On the complexity of scheduling incompatible jobs with unit-times.
\newblock In A.M. Borzyszkowski and S.~Soko{\l}owski, editors, {\em MFCS -
  International Symposium on Mathematical Foundations of Computer Science},
  pages 191--300. Springer, Berlin, Heidelberg, 1993.

\bibitem{camerini1980complexity}
Paolo~M. Camerini, Giulia Galbiati, and Francesco Maffioli.
\newblock Complexity of spanning tree problems: Part {I}.
\newblock {\em European Journal of Operational Research}, 5(5):346--352, 1980.

\bibitem{camerini1983complexity}
Paolo~M. Camerini, Giulia Galbiati, and Francesco Maffioli.
\newblock On the complexity of finding multi-constrained spanning trees.
\newblock {\em Discrete Applied Mathematics}, 5(1):39--50, 1983.

\bibitem{carrabs2019multiethnic}
Francesco Carrabs, Carmine Cerrone, and Rosa Pentangelo.
\newblock A multiethnic genetic approach for the minimum conflict weighted
  spanning tree problem.
\newblock {\em Networks}, 74(2):134--147, 2019.

\bibitem{CCPR.18}
Francesco Carrabs, Raffaele Cerulli, Rosa Pentangelo, and Andrea Raiconi.
\newblock Minimum spanning tree with conflicting edge pairs: a branch-and-cut
  approach.
\newblock {\em Annals of Operations Research}, 2018.

\bibitem{cook1971complexity}
Stephen~A. Cook.
\newblock The complexity of theorem-proving procedures.
\newblock In {\em Proceedings of the third annual ACM Symposium on Theory of
  computing}, pages 151--158, 1971.

\bibitem{Courcelle90}
Bruno Courcelle.
\newblock {The Monadic Second-Order Logic of Graphs. I. Recognizable Sets of
  Finite Graphs}.
\newblock {\em Information and Computation}, 85(1):12--75, 1990.

\bibitem{FPTBook}
Marek Cygan, Fedor~V. Fomin, Lukasz Kowalik, Daniel Lokshtanov, D{\'{a}}niel
  Marx, Marcin Pilipczuk, Michal Pilipczuk, and Saket Saurabh.
\newblock {\em Parameterized Algorithms}.
\newblock Springer, 2015.

\bibitem{CL.19}
Alexandre~Salles da~Cunha and Abilio Lucena.
\newblock Modeling and solving the angular constrained minimum spanning tree
  problem.
\newblock {\em Comput. Oper. Res.}, 112, 2019.

\bibitem{DPS.09}
Andreas Darmann, Ulrich Pferschy, and Joachim Schauer.
\newblock Minimum spanning trees with conflict graphs.
\newblock {\em Optimization online}, 2009.

\bibitem{darmann2011paths}
Andreas Darmann, Ulrich Pferschy, Joachim Schauer, and Gerhard~J. Woeginger.
\newblock Paths, trees and matchings under disjunctive constraints.
\newblock {\em Discrete Applied Mathematics}, 159(16):1726--1735, 2011.

\bibitem{deo1968shortest}
N.~Deo and S.~Louis Hakimi.
\newblock {The shortest generalized Hamiltonian tree}.
\newblock In {\em Proceedings of the 6th Annual Allerton Conference, vol.
  17(3)}, pages 409--423, 1968.

\bibitem{DK.97}
Narsingh Deo and Nishit Kumar.
\newblock Computation of constrained spanning trees: A unified approach.
\newblock In {\em Network Optimization. Lecture Notes in Economics and
  Mathematical Systems, vol. 450}, pages 196--220. Springer, Berlin,
  Heidelberg, 1997.

\bibitem{dias2017min}
Fabio C.~S. Dias, Manoel Camp{\^e}lo, Cr{\'\i}ston Souza, and Rafael Andrade.
\newblock Min-degree constrained minimum spanning tree problem with fixed
  centrals and terminals: Complexity, properties and formulations.
\newblock {\em Computers \& Operations Research}, 84:46--61, 2017.

\bibitem{edmonds1979matroid}
Jack Edmonds.
\newblock Matroid intersection.
\newblock {\em Annals of discrete Mathematics}, 4:39--49, 1979.

\bibitem{EFL.11}
Leah Epstein, Lene~M. Favrholdt, and Asaf Levin.
\newblock Online variable-sized bin packing with conflicts.
\newblock {\em Discrete Optimization}, 8(2):333--343, 2011.

\bibitem{garey1979computers}
Michael~R. Garey and David~S. Johnson.
\newblock {\em Computers and intractability}.
\newblock Freeman, San Francisco, 1979.

\bibitem{GR.19}
Frank Gurski and Carolin Rehs.
\newblock Solutions for the knapsack problem with conflict and forcing graphs
  of bounded clique-width.
\newblock {\em Mathematical Methods of Operations Research}, 89:411--432, 2019.

\bibitem{hassin1995minimum}
Refael Hassin and Arie Tamir.
\newblock On the minimum diameter spanning tree problem.
\newblock {\em Information processing letters}, 53(2):109--111, 1995.

\bibitem{ho1991minimum}
Jan-Ming Ho, DT~Lee, Chia-Hsiang Chang, and CK~Wong.
\newblock Minimum diameter spanning trees and related problems.
\newblock {\em SIAM Journal on Computing}, 20(5):987--997, 1991.

\bibitem{ImpagliazzoP01-ETH}
Russell Impagliazzo and Ramamohan Paturi.
\newblock {On the Complexity of $k$-SAT}.
\newblock {\em Journal of Computer and System Sciences}, 62(2):367--375, 2001.

\bibitem{ImpagliazzoP01}
Russell Impagliazzo, Ramamohan Paturi, and Francis Zane.
\newblock {Which Problems Have Strongly Exponential Complexity?}
\newblock {\em Journal of Computer and System Sciences}, 63(4):512--530, 2001.

\bibitem{KLM.13}
Mamadou~Moustapha Kant\'e, Christian Laforest, and Benjamin Mom\`ege.
\newblock Trees in graphs with conflict edges or forbidden transitions.
\newblock In {\em Proc. of the 10th International Conference on Theory and
  Applications of Models of Computation (TAMC)}, volume 7876 of {\em LNCS},
  pages 343--354, 2013.

\bibitem{konemann2005approximating}
Jochen K{\"o}nemann, Asaf Levin, and Amitabh Sinha.
\newblock Approximating the degree-bounded minimum diameter spanning tree
  problem.
\newblock {\em Algorithmica}, 41(2):117--129, 2005.

\bibitem{Krishnamoorthy2001}
Mohan Krishnamoorthy, Andreas~T. Ernst, and Yazid~M. Sharaiha.
\newblock Comparison of algorithms for the degree constrained minimum spanning
  tree.
\newblock {\em Journal of Heuristics}, 7(6):587--611, 2001.

\bibitem{lawler1976combinatorial}
Eugene~L. Lawler.
\newblock Combinatorial optimization: Networks and matroids.
\newblock {\em New York}, 1976.

\bibitem{lu1992power}
Hsueh-I Lu and R~Ravi.
\newblock The power of local optimization: Approximation algorithms for
  maximum-leaf spanning tree.
\newblock In {\em Proceedings of the Annual Allerton Conference on
  Communication Control and Computing, vol. 30}, pages 533--533, 1992.

\bibitem{MapaUrrutia2015}
Sílvia Maria~Santana Mapa and Sebastián Urrutia.
\newblock On the maximum acyclic subgraph problem under disjunctive
  constraints.
\newblock {\em Information Processing Letters}, 115(2):119--124, 2015.

\bibitem{MARTINEZ2014}
Leonardo~Conegundes Martinez and Alexandre~Salles da~Cunha.
\newblock The min-degree constrained minimum spanning tree problem:
  Formulations and branch-and-cut algorithm.
\newblock {\em Discrete Applied Mathematics}, 164:210--224, 2014.

\bibitem{Oxley-matroids}
James~G. Oxley.
\newblock {\em Matroid theory}.
\newblock Oxford University Press, 1992.

\bibitem{papadimitriou1984two}
Christos~H. Papadimitriou and Umesh~V Vazirani.
\newblock On two geometric problems related to the travelling salesman problem.
\newblock {\em Journal of Algorithms}, 5(2):231--246, 1984.

\bibitem{PS.13}
Ulrich Pferschy and Joachim Schauer.
\newblock The maximum flow problem with disjunctive constraints.
\newblock {\em J. Combinatorial Optimization}, 26:109--119, 2013.

\bibitem{robins1995low}
Gabriel Robins and Jeffrey~S. Salowe.
\newblock Low-degree minimum spanning trees.
\newblock {\em Discrete \& Computational Geometry}, 14(2):151--165, 1995.

\bibitem{samerUrrutia2015}
Phillippe Samer and Sebasti{\'a}n Urrutia.
\newblock A branch and cut algorithm for minimum spanning trees under conflict
  constraints.
\newblock {\em Optimization Letters}, 9(1):41--55, 2015.

\bibitem{Singh2015}
Mohit Singh and Lap~Chi Lau.
\newblock Approximating minimum bounded degree spanning trees to within one of
  optimal.
\newblock {\em J. ACM}, 62(1):1:1--1:19, 2015.

\bibitem{Viana16}
Luiz~Alberto Viana.
\newblock {\'Arvore geradora com depend\^encias m\'\i nima}.
\newblock Master's thesis, Federal University of Cear\'a, 2016.

\bibitem{vianaCampelo2019}
Luiz~Alberto Viana and Manoel Camp\^elo.
\newblock Two dependency constrained spanning tree problems.
\newblock {\em International Transactions in Operational Research},
  27(2):867--898, 2020.

\bibitem{W96}
Douglas~B. West.
\newblock {\em Introduction to graph theory}.
\newblock Prentice Hall Upper Saddle River, NJ, 1996.

\bibitem{zhangKabadiPunnen2011}
Ruonan Zhang, Santosh~N. Kabadi, and Abraham~P. Punnen.
\newblock The minimum spanning tree problem with conflict constraints and its
  variations.
\newblock {\em Discrete Optimization}, 8(2):191--205, 2011.

\end{thebibliography}

\end{document}